\newcommand{\BC}{\mathbb C}
\newcommand{\BA}{\mathbb A}
\newcommand{\BB}{\mathbb B}
\newcommand{\BT}{\mathbb T}
\newcommand{\BL}{\mathbb L}
\newcommand{\BX}{\mathbb X}
\newcommand{\ig}{\mbox{\rm Inc} }
\newcommand{\Block}{\mbox{\rm Block} }
\newtheorem{theorem}{Theorem}[section]
\newtheorem{lemma}[theorem]{Lemma}
\newtheorem{corollary}[theorem]{Corollary}
\def\doi{3 (4:6) 2007}
\begin{document}

\title[A Characterisation of
first order CSP's]{A Characterisation of First-Order Constraint
Satisfaction Problems\rsuper *}

\author[B. Larose]{Benoit Larose\rsuper a}       %required
\address{{\lsuper a}Department of Mathematics and Statistics \\
 Concordia University \\
 1455 de Maisonneuve West\\
Montr\'eal, Qc \\
Canada, H3G 1M8}     %required
\email{larose@mathstat.concordia.ca}  %optional
\thanks{{\lsuper a}Research partially supported by NSERC, FQRNT and CRM}   %optional

\author[C.~Loten]{Cynthia Loten\rsuper b}     %optional
\address{{\lsuper b}Department of Mathematics and Statistics \\
University College of the Fraser Valley \\ 33844 King Rd
Abbotsford, BC Canada V2S 7M8 }
\email{cindy.loten@shaw.ca}  %optional
% \thanks{thanks 2, optional.}    %optional

\author[C.~Tardif]{Claude Tardif\rsuper c}     %optional
\address{{\lsuper c}Department of Mathematics and Computer Science\\
Royal Military College of Canada \\
PO Box 17000 Station ``Forces'' \\
Kingston, Ontario\\
Canada, K7K 7B4 }   %optional
\email{Claude.Tardif@rmc.ca }  %optional
\thanks{{\lsuper c}Research partially supported by NSERC and ARP}    %optional

%% mandatory lists of keywords and classifications:
\keywords{Constraint Satisfaction Problems, First Order Logic,
Tree Duality, Finite Duality.} 
\subjclass{F.2.2; F.4.m}
\titlecomment{{\lsuper *}A short version of this paper appeared in the Proceedings of the 21st
Symposium on Logic in Computer Science (LICS 2006).}
%%%%%%%%%%%%%%%%%%%%%%%%%%%%%%%%%%%%%%%%%%%%%%%%%%%%%%%%%%%%%%%%%%%%%%%%%%%

\begin{abstract}
  \noindent We describe simple algebraic and combinatorial characterisations of
  finite relational core structures admitting
finitely many obstructions. As a consequence, we show that it is
decidable to determine whether a constraint satisfaction problem
is first-order definable: we show the general problem to be {\bf
NP}-complete, and give a polynomial-time algorithm in the case of
cores. A slight modification of this algorithm provides, for
first-order definable CSP's, a simple poly-time algorithm to
produce a solution when one exists. As an application of our
algebraic characterisation of first order CSP's, we describe a
large family of {\bf L}-complete CSP's.
\end{abstract}

\maketitle

\section{Introduction}

 The Constraint Satisfaction
Problem (CSP) consists of determining, given a finite set of
variables with constraints on these, whether there exists an
assignment of values  to these variables that satisfies all the
given constraints. The great flexibility of this framework has
made the CSP the focus of a great deal of attention from
researchers in various fields (see for instance the recent survey
\cite{CohenJ06}). In general the problem is {\bf NP}-complete, but
restricting the type of constraint relations involved may yield
tractable problems. In fact, Schaefer \cite{sch} and more recently
\cite{ABISV05} have completely classified the complexity of
Boolean CSP's and from their work it follows that Boolean CSP's
are either trivial, first-order definable, or  complete (under
$AC^0$ reductions) for one of the following standard classes of
problems: {\bf L}, {\bf NL}, {\bf P}, $\oplus${\bf  L} and {\bf
NP}. One of the outstanding problems in the field is the so-called
{\em dichotomy conjecture} \cite{fedvar1} that states that every
CSP should be either in {\bf P} or {\bf NP}-complete.

In this paper we adopt the convenient point of view offered in
\cite{fedvar2} where CSP's are viewed as homomorphism problems
with a fixed target. In other words, if $\BA$ is a finite
relational structure, then $\BA$-CSP consists of all structures
that admit a homomorphism to $\BA$. Viewed this way, it becomes
natural to ask which CSP's can be described in various logics. For
instance in \cite{LT07}, the result of Allender et al. mentioned
earlier is given a descriptive complexity analog, whereby it is
shown that Boolean CSP's that lie in the classes {\bf L} and {\bf
NL} are precisely those whose complement is describable in
symmetric and linear Datalog respectively. Arguably the simplest
CSP's (other than trivial ones) are those whose members are
describable by a first-order sentence. A very natural question in
the vein of the dichotomy conjecture is then the following: can we
determine (easily) from the constraint relations whether a given
CSP is first-order definable ? Related questions for Datalog and
its restrictions remain open \cite{Dalmau05}, \cite{fedvar2}. An
important first step in this direction is Atserias' result
\cite{ats} proving that FO-definable CSP's are precisely those
with  {\em finite duality}, i.e. those target structures $\BA$ for
which there exists a finite set $\mathcal F$ of structures such
that $\BB$ admits no homomorphism to $\BA$ precisely if some
structure in $\mathcal F$ admits a homomorphism to $\BB$.
 This result was followed closely by the more general
result for homomorphism-closed classes by Rossman \cite{ros}.

In this paper, we give several equivalent characterisations of
FO-definable CSP's. We first give a characterisation with an
algebraic flavour: core structures with an FO-definable CSP are
characterised by the existence of special near-unanimity
operations preserving their basic relations (Theorem
\ref{core_nuf}). For general structures, we prove that the problem
of determining if $\BA$-CSP is first-order definable is {\bf
NP}-complete (Theorem \ref{foisnpc}); if the structure $\BA$ is a
core, then in fact there exists a simple polynomial-time algorithm
to determine this (Theorem \ref{focoreispoly}). We shall also
describe in this case a simple algorithm that produces a solution
in polynomial-time (Theorem \ref{distohom}). Let $\BA$ be a core
structure such that $\BA$-CSP is first-order definable, and let
$\BB$ be a structure with the same universe, such that the basic
relations of $\BB$ are constraint relations ``inferred'' from
those of $\BA$, i.e. each is describable by a primitive positive
formula with atomic formulas of the form $\overline{x} \in \theta$
with $\theta$ a basic relation of $\BA$; these inferred relations
play a crucial role in the study of the complexity of CSP's (see
e.g. \cite{CohenJ06}). It is known that $\BB$-CSP is logspace
reducible to $\BA$-CSP \cite{jea}, but in general it will not be
first-order definable. As a simple application of our algebraic
characterisation of first-order definable CSP's (Corollary
\ref{1antoa} and Proposition \ref{nuf}) we describe precisely
which $\BB$-CSP are first-order definable; the others turn
out to be {\bf L}-complete, with their complement definable in
symmetric Datalog \cite{ELT07}, a fragment of linear Datalog.

To illustrate briefly the above results, we outline the algorithms
in the special case of digraphs. For two vertices $u, v$ of a
digraph $H$, we say that $v$ {\em dominates} $u$ if every
outneighbour of $u$ is also an outneighbour of $v$ and every
inneighbour of $u$ is also an inneighbour of $v$. If there exists
a sequence $H = H_0, H_1, \ldots, H_n = R$ of digraphs such that
$H_i$ is obtained from $H_{i-1}$ by removing a dominated vertex
for $i = 1, \ldots, n$, we say that $H$ {\em dismantles} to $R$.
More generally, $R$ is a {\em retract} of $H$ if there exists a
homomorphism from $H$ to $R$ whose restriction to $R$ is the
identity. The square $R^2$ of a digraph $R$ has vertex set $R^2$
where the arcs are the couples $((u_0,u_1),(v_0,v_1))$ such that
$(u_0,v_0)$ and $(u_1,v_1)$ are arcs of $R$, and its diagonal
$\Delta_{R^2}$ is the set of vertices of $R^2$ with both
coordinates equal.

The main algorithm to determine whether $H$-CSP is first-order
definable proceeds as follows: in $H^2$, remove any dominated
element outside the diagonal, if any. Repeat this procedure until
no element can be removed. If the resulting set is the diagonal,
then the problem is first-order definable. Assuming that $H$ is a
core, i.e. that it has no proper retract, then the converse also
holds.

\begin{figure}[htb]
\begin{center}
\includegraphics[scale=0.65]{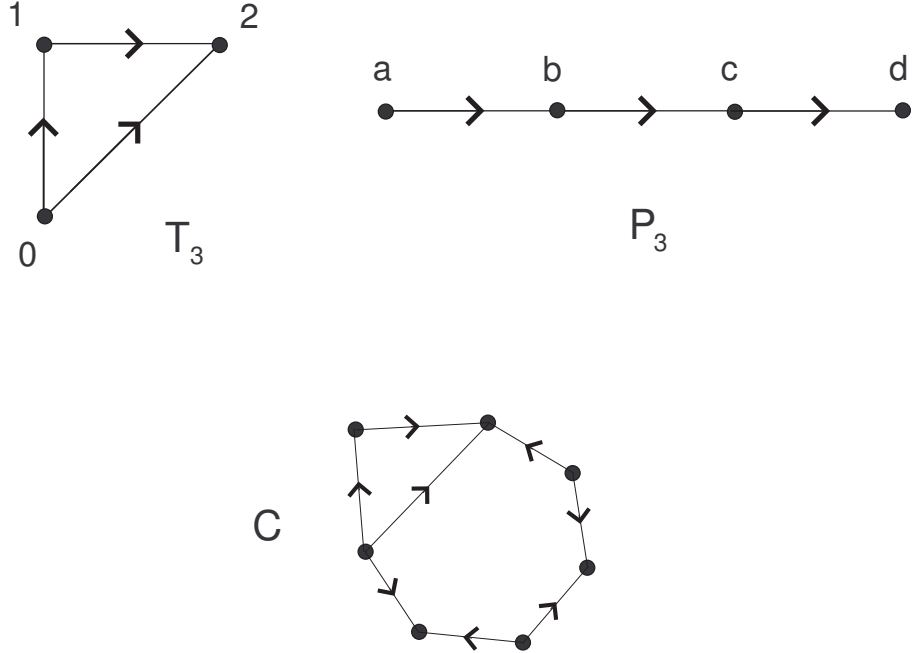}
\caption{The digraphs $T_3$, $P_3$ and $C$.}
\end{center}
\end{figure}

In the figure above, $T_3$ is the transitive tournament on three
vertices. In $T_3^2$, the two isolated vertices $(0,2), (2,0)$ are
dominated by all other vertices, the sources $(0,1), (1,0)$ are
dominated by $(0,0)$ and the sinks $(1,2), (2,1)$ are dominated by
$(2,2)$. Hence $T_3^2$ dismantles to $\Delta_{T_3^2}$, which shows
that $T_3$-CSP is first-order definable. In fact it is well known
that a directed graph $G$ admits a homomorphism to $T_3$ if and
only if there is no homomorphism from the directed 3-path $P_3$ to
$G$, and this condition is described by the first-order sentence
$\neg \, \exists \, a \, \exists \, b \, \exists \, c \, \exists
\, d \, ( A(a,b) \wedge A(b,c) \wedge A(c,d) )$, where $A(x,y)$
denotes the existence of an arc from $x$ to $y$. $P_3$-CSP is not
first-order definable; indeed the path $P_3$ is a core and $P_3^2$
can only be dismantled down to $P_3^2 \setminus \{ (a,d), (d,a)
\}$. The square of $C$ cannot be dismantled to its diagonal, but
$C$ admits $T_3$ as a retract, whence $C$-CSP is first-order
definable. Also, it is easy to check that $C \times T_3$ dismantles
to the ``graph'' $\{ (x, \phi(x)) : x \in C \}$ of a homomorphism
$\phi: C \rightarrow T_3$. In Section \ref{psfocsp},
we will see that such dismantlings of products can always be used
to produce solutions of first-order definable constraint satisfaction
problems.

\section{Preliminaries} \label{sectionprelim} For basic notation
and terminology with follow mainly \cite{dalkolvar} and
\cite{nestar}. A {\em vocabulary} is a finite set $\sigma =
\{R_1,\dots,R_m\}$ of {\em relation symbols}, each with an {\em
arity} $r_i$ assigned to it. A $\sigma$-structure is a relational
structure $\BA = \langle A;R_1(\BA),\dots,R_m(\BA)\rangle$ where
$A$ is a non-empty set called the {\em universe} of $\BA$, and
$R_i(\BA)$ is an $r_i$-ary relation on $A$ for each $i$.
We will use the same capital letter in blackboard bold and slanted
typeface to denote a structure and its universe respectively.
The elements of $R_i(\BA)$, $1\leq i \leq m$ will be called {\em
hyperedges} of $\BA$. For $\sigma$-structures $\BA$ and $\BB$, a
{\em homomorphism} from $\BA$ to $\BB$ is a map $f:A \rightarrow
B$ such that $f(R_i(\BA)) \subseteq R_i(\BB)$ for all
$1=1,\dots,m$, where for any relation $R \in \sigma$ of arity $r$
we have
$$f(R) = \{(f(x_1),\dots,f(x_r)):(x_1,\dots,x_r)
\in R\}.$$ A $\sigma$-structure $\BB$ is a {\em substructure} of a
$\sigma$-structure $\BA$ if $B \subseteq A$ and the identity map
on $B$ is a homomorphism from $\BB$ to $\BA$. For a subset $B$ of
$A$, the {\em substructure $\BB$ of $\BA$ induced by $B$} is the
$\sigma$-structure with universe $B$ with relations
$R_i(\BB)=R_i(\BA) \cap B^{r_i}$ for every $i$.
 A substructure $\BB$ of
$\BA$ is called a {\em retract} of $\BA$ if there exists a
homomorphism $\rho$ from $\BA$ to $\BB$ whose restriction to $B$
is the identity; the map $\rho$ is then called a {\em retraction}.
A structure $\BA$ is called a {\em core} if it has no retract
other than itself. It is well known (see \cite{nestar}) that every
(finite) $\sigma$-structure has a core which is unique up to
isomorphism.

Let $\BA$ be a $\sigma$-structure. We define the {\em incidence
multigraph} $\ig(\BA)$ of $\BA$ as the bipartite multigraph with
parts $A$ and $ \Block(\BA)$ which consists of all pairs $(R,r)$
such that $R \in \sigma$ and  $r \in R(A)$,  and with edges
$e_{a,i,B}$ joining $a \in A$ to $B = (R,(x_1, \ldots, x_{r})) \in
\Block(\BA)$ when $x_i = a$. This allows us to import some basic
concepts from graph theory: the {\em distance} $d_{\BA}(a,b)$
between two elements $a$ and $b$ of $A$ is defined as half their
distance in $\ig(\BA)$, the {\em diameter} of $\BA$ is defined as
half the diameter of $\ig(\BA)$, and the {\em girth} of $\BA$ is
defined as half the shortest length of a cycle in $\ig(\BA)$. In
particular, $\BA$ has girth $1$ if and only if $\ig(\BA)$ has
parallel edges, and infinite girth if and only if $\ig(\BA)$ is
acyclic. Notice in particular that tuples with repeated entries
(such as $(a,a,b)$) create parallel edges and hence cycles; this
property is not captured in the Gaifman graph. We'll require a
finer notion of tree  below and this explains why we choose this
variant of a (multi)graph associated to a relational structure
rather than the Gaifman graph.

Although this presentation of the girth differs from that given in
\cite{fedvar2}, the concept is the same and we can use the
following Erd\H{o}s-type result.
\begin{lemma}[\cite{fedvar2} Theorem 5] \label{largegirth}
Let $\BA$ and $\BB$ be $\sigma$-structures such that there exist
no homomorphism from $\BA$ to $\BB$. Then for any positive integer
$n$ there exists a $\sigma$-structure $\BA_n$ of girth greater
than $n$ such that there exists a homomorphism from $\BA_n$ to
$\BA$ but no homomorphism from $\BA_n$ to $\BB$.
\end{lemma}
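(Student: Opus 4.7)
The statement is an Erd\H{o}s-type sparse incomparability theorem for general relational structures, and the natural route is the probabilistic method: build a random ``cover'' of $\BA$ that is locally tree-like, yet still resists any homomorphism to $\BB$, then sparsify to destroy any remaining short cycles.

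The plan is as follows. Fix a large integer $N$ and a probability $p = N^{-1+\epsilon}$ with $\epsilon = \epsilon(n)>0$ small, to be tuned later. Define a random $\sigma$-structure $\BC$ with universe $A \times [N]$ by independently including, for each hyperedge $(a_1,\ldots,a_r) \in R_i(\BA)$ and each lift $(j_1,\ldots,j_r) \in [N]^r$, the tuple $((a_1,j_1),\ldots,(a_r,j_r))$ in $R_i(\BC)$ with probability $p$. Projection onto the first coordinate is by construction a homomorphism $\BC \to \BA$, so the property ``homomorphism to $\BA$'' will hold automatically for any substructure of $\BC$.

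Two probabilistic estimates carry the proof. The first is a routine moment calculation on $\ig(\BC)$: cycles of length at most $2n$ correspond to closed alternating walks of bounded length, and for our chosen $p$ the expected number of such cycles is $o(N)$. Markov's inequality then yields a realisation in which at most $N/2$ elements participate in a short cycle. The second, harder step shows that with positive probability $\BC$ admits no homomorphism to $\BB$, and moreover that this property is robust under removing an arbitrary set of $o(N)$ elements. For a candidate $f: A \times [N] \to B$ one bounds the probability that $f$ is a homomorphism $\BC \to \BB$: the hypothesis $\BA \not\to \BB$ enters through an averaging argument, since if $f$ respected all potential hyperedges then averaging over random sections $a \mapsto (a, j(a))$ and applying pigeonhole would produce a homomorphism $\BA \to \BB$; hence $f$ must violate a linear-in-$N$ number of potential hyperedges, each of which is present in $\BC$ with probability $p$. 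A union bound over the $|B|^{|A|N}$ candidate maps then succeeds for suitably chosen $\epsilon$.

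Combining the two estimates, pick a realisation with few short cycles in which no $\BB$-homomorphism survives any $o(N)$ deletion; remove from $\BC$ one element from each short cycle, together with all incident hyperedges, to obtain a substructure $\BA_n$ of girth greater than $n$. The projection $\pi$ restricts to a homomorphism $\BA_n \to \BA$, and the robustness of the non-homomorphism property guarantees $\BA_n \not\to \BB$. The chief obstacle throughout is the parameter balance: $p$ must be small enough that short cycles are sublinear in $N$, yet large enough that the union bound against $\BB$-homomorphisms succeeds and remains valid after deletion. This balancing is the core of the Ne\v{s}et\v{r}il--R\"odl-style construction that underlies the cited theorem of Feder and Vardi.
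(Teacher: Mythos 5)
The paper does not prove this lemma at all: it is cited verbatim as Theorem~5 of Feder and Vardi \cite{fedvar2}, so there is no internal proof to compare against. Your overall strategy (a random lift of $\BA$, a union bound to kill $\BB$-homomorphisms, then a deletion step to destroy short cycles) is indeed the Erd\H{o}s/Ne\v{s}et\v{r}il--R\"odl/Feder--Vardi approach that proves the cited theorem, and your identification of the two probabilistic estimates to be balanced is correct in spirit.

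There is, however, a concrete gap in the parameter choice: $p = N^{-1+\epsilon}$ is calibrated for binary relations and is simply wrong once the maximum arity $r$ is at least $3$. Take a single $3$-ary relation and a single hyperedge $(a,b,c) \in R(\BA)$ with $a,b,c$ distinct. Two lifts $((a,i),(b,j),(c,k))$ and $((a,i),(b,j),(c,k'))$ with $k \neq k'$ form a $4$-cycle in $\ig(\BC)$, i.e.\ a girth-$2$ cycle. There are $\Theta(N^4)$ such pairs, each surviving with probability $p^2 = N^{-2+2\epsilon}$, so the expected number of girth-$2$ cycles is $\Theta(N^{2+2\epsilon})$, which vastly exceeds the $\Theta(N)$ elements of $\BC$. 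No choice of $\epsilon>0$ fixes this, and no deletion of a sublinear (or even linear) set of elements can destroy all short cycles; the robustness argument collapses. The correct calibration must depend on the arity, e.g.\ a separate $p_i \approx N^{-(r_i-1)+\epsilon}$ for each relation $R_i$, after which the cycle count becomes $O(N^{k\epsilon})$ and the anti-$\BB$ union bound delivers $p_i M \gg N$ as needed. A second issue you do not address: when a hyperedge of $\BA$ has repeated entries (e.g.\ $(a,a,b)$), the violated lifts produced by your averaging-over-sections argument are exactly the diagonal ones, which create parallel edges in $\ig(\BC)$ under this paper's (finer) notion of girth and hence must themselves be deleted; one needs an additional argument (e.g.\ a Cauchy--Schwarz step passing from the section distribution to the independent product distribution) to guarantee that off-diagonal lifts are also violated in large numbers. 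These two points are precisely what makes the relational case harder than the graph case, and as written your sketch would not survive them.
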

\noindent Note that a $\sigma$-structure of large girth must have
large diameter unless it is acyclic.

A {\em loop} in a $\sigma$-structure $\BA$ is an element $a \in A$
such that $(a,\dots,a) \in R_i(\BA)$ for any $i$; equivalently, $a
\in A$ is a loop if and only if for every $\sigma$-structure $\BB$
the constant map $\BB \rightarrow \BA$ with value $a$ is a
homomorphism. In particular, the image of a loop under a
homomorphism is itself a loop. For an integer $n$ the {\em
$n$-link} of type $\sigma = \{R_1,\dots,R_m\}$ is the
$\sigma$-structure
$$\BL_n = \langle
\{0, 1, \ldots, n\};R_1(\BL_n),\dots,R_m(\BL_n)\rangle,$$ such
that $R_i(\BL_n) = \cup_{j = 1}^n \{j-1, j\}^{r_i}$ for $i = 1,
\ldots, m$ (where $r_i$ is the arity of the relation $R_i$). Note
that every $i \in \{0, 1, \ldots, n\}$ is a loop in $\BL_n$. A
{\em link} in an arbitrary $\sigma$-structure is a homomorphic
image of $\BL_n$ for some $n$. The term ``path'' is more common
than ``link'',  but we chose the latter to make it clear that
these are not trees in the sense defined below.

\subsection{Trees} \label{subsectiontrees}

A $\sigma$-structure $\BT$ is called a {\em $\sigma$-tree} (or
{\em tree} for short) if $\ig(\BT)$ is a tree, i.e. it is acyclic
and connected. We require the following technical results:

\begin{lemma} \label{treedecomposition}
%For every $\sigma$-tree $\BT$ with $n$ hyperedges, there exists a
For every $\sigma$-tree $\BT$ with $n$ hyperedges, there is a
sequence $\BT = \BT_n, \BT_{n-1}, \ldots, \BT_1$ of subtrees of
$\BT$ with the following properties: for each $j = 1, \ldots, n-1$
\begin{enumerate} \item
 $\BT_{j}$ has $j$ hyperedges; \item $\BT_{j}$ is a subtree of
 $\BT_{j+1}$; \item if $(x_1,\dots,x_r)$ is the  hyperedge of $\BT_{j+1}$ which
 does not belong to $\BT_{j}$ then there exists a unique
 index $i$ such that $x_i$
 is in the universe of $\BT_{j}$.
\end{enumerate}

\end{lemma}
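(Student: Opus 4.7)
The plan is to proceed by induction on $n$, at each stage peeling off one hyperedge so that the resulting structure is still a $\sigma$-tree differing from its predecessor in a controlled way. The base case $n=1$ is trivial (take $\BT_1 = \BT$, no conditions to verify). For the inductive step the key combinatorial claim is that any $\sigma$-tree $\BT$ with $n \geq 2$ hyperedges has a \emph{peripheral} hyperedge: a hyperedge $B = (R,(x_1,\dots,x_r))$ such that exactly one coordinate $x_{i^\ast}$ has degree $\geq 2$ in $\ig(\BT)$, while every other coordinate is a leaf of $\ig(\BT)$.

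To prove this claim I take a longest path $v_0, v_1, \dots, v_\ell$ in the tree $\ig(\BT)$; its endpoints are leaves. If $v_0$ is a hyperedge vertex then it has degree $1$, so its underlying relation is unary and $B := v_0$ is trivially peripheral. Otherwise $v_0 \in A$ is an element-leaf, contained in a unique hyperedge $B := v_1$. I claim that every neighbour of $B$ other than $v_2$ is a leaf of $\ig(\BT)$: if some neighbour $w \neq v_2$ had another hyperedge-neighbour $B' \neq B$, then the walk $B', w, v_1, v_2, \dots, v_\ell$ would be non-backtracking, and since $\ig(\BT)$ is a tree it would be a simple path of length $\ell + 1$, contradicting the maximality of the chosen path. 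Thus $B$ is peripheral, with $i^\ast$ being the coordinate corresponding to $v_2$ (or arbitrary in the unary case).

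Given such a peripheral $B$, define $\BT_{n-1}$ as the substructure of $\BT$ induced on the universe obtained from $T$ by deleting every coordinate $x_i$ of $B$ that has degree $1$ in $\ig(\BT)$. The incidence multigraph $\ig(\BT_{n-1})$ is exactly $\ig(\BT)$ with $B$ and these isolated element-vertices removed: it is acyclic (a subgraph of $\ig(\BT)$) and connected (only leaves and $B$ are removed, and the rest dangles from $x_{i^\ast}$), so $\BT_{n-1}$ is a $\sigma$-tree with $n-1$ hyperedges and a subtree of $\BT$. By construction the omitted hyperedge $B$ meets the universe of $\BT_{n-1}$ in the single element $x_{i^\ast}$, which verifies condition~(3). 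Applying the inductive hypothesis to $\BT_{n-1}$ yields the full sequence.

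The main technical point I expect to need care with is the existence of the non-leaf coordinate $x_{i^\ast}$: the claim must produce \emph{exactly} one persistent coordinate, not zero. When $B$ has arity $\geq 2$ this is guaranteed by the longest-path argument, since $v_2$ is necessarily a non-leaf (being an internal vertex of the path). The only way to avoid such a $v_2$ is $\ell = 1$, which forces $\ig(\BT)$ to consist of a single hyperedge attached to its coordinates and hence $n = 1$, outside the inductive step. Together with the observation that coordinates of any hyperedge in a tree must be distinct (otherwise $\ig(\BT)$ would carry parallel edges, hence a cycle), this handles all the degenerate cases.
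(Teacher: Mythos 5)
Your proof follows essentially the same line as the paper's: take a longest path $v_0,\dots,v_\ell$ in $\ig(\BT)$ and peel off the block at one end. The explicit verification that every coordinate of $B$ other than $v_2$ is a leaf is a nice touch that the paper leaves implicit. There is, however, a concrete gap in the unary branch. When $v_0 = B = (R,(x_1)) \in \Block(\BT)$ and $n\geq 2$, the single coordinate $x_1$ is \emph{not} a leaf of $\ig(\BT)$: it must lie on at least one other block, else $\ig(\BT)$ would be disconnected. Your recipe therefore deletes nothing from $T$, and ``the substructure of $\BT$ induced on $T$'' is just $\BT$ itself, with $B$ still among its hyperedges. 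The assertion that $\ig(\BT_{n-1})$ equals $\ig(\BT)$ minus $B$ and the deleted element-leaves fails here, and $\BT_{n-1}$ has $n$ hyperedges, not $n-1$.

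The repair is exactly what the paper does: in this branch form $\BT_{n-1}$ by removing the tuple $(x_1)$ from $R(\BT)$ while leaving the universe unchanged. This yields a substructure of $\BT$ that is a tree (which is all the lemma asks for), even though it is not an \emph{induced} substructure; restricting yourself to induced substructures is what creates the problem. With that modification your argument is correct: in the non-unary branch your induced substructure coincides with ``remove $B$ from its relation and remove the leaf coordinates from the universe,'' and condition~(3) holds in both branches with $i = i^\ast$.
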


\begin{proof}
Let $u_0, u_1, \ldots, u_k$ be a path of maximal length in
$\ig(\BT)$, where $\BT = (T,R_1, \ldots, R_m)$ has more than one
hyperedge. If $u_0 = (R,(x_1)) \in \Block(\BT)$, ($R$ has to be a
$1$-ary relation of $\sigma$), we obtain a new tree $\BT'$ from
$\BT$ by removing $x_1$ from $R$. If $u_0 \in T$, then $u_1 = (R,
(x_1, \ldots, x_r)) \in \Block(\BT)$, and we obtain a new tree
$\BT'$ by removing $ (x_1, \ldots, x_r)$ from $R$ and $\{x_1,
\ldots, x_m\} \setminus \{u_2\}$ from $T$. Repeating this
proceedure, we eventually obtain the desired decomposition.
\end{proof}

\begin{lemma} \label{treesofagivendiameter}
Let $\sigma = \{R_1, \ldots, R_m\}$ be a vocabulary. Then for any
integer $n$ the number of core $\sigma$-trees of diameter at most
$n$ is finite.
\end{lemma}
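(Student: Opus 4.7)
The plan is to induct on the diameter $n$, leveraging the core hypothesis to bound the branching at every vertex. The key observation is the following: in a core $\sigma$-tree $\BT$, at any vertex $v$, the ``branches'' at $v$ must be pairwise non-isomorphic as rooted substructures. Here a \emph{branch at $v$ through a block $B$ containing $v$} is the sub-$\sigma$-tree made of $v$ together with the connected component of $\ig(\BT) \setminus \{v\}$ containing $B$; because $\ig(\BT)$ is a tree, this gives a clean decomposition of $\BT$ at $v$. Were two such branches isomorphic via a map $\phi$ fixing $v$, the map acting as $\phi$ on the first branch and as the identity elsewhere would be a proper retraction of $\BT$, contradicting core-ness.

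I would then run an auxiliary induction on depth. Declare a rooted $\sigma$-tree $(\BT, v)$ to be \emph{good} if at every vertex its branches are pairwise non-isomorphic. By the observation above, every core $\sigma$-tree, rooted at any chosen vertex, is good, and moreover goodness is inherited by each sub-branch viewed as a rooted tree at its own root. The claim is that for each $d$ the number of iso-classes of good rooted $\sigma$-trees of depth at most $d$ is finite. The base case $d = 0$ is immediate (a vertex together with a subset of the unary relations). For the inductive step, each block $B$ through the root $v$ yields a \emph{configuration}: the relation symbol of $B$, the position of $v$ within $B$, and, at each other position, the iso-class of the good rooted sub-sub-tree of depth at most $d-1$ hanging off it. By induction only finitely many such iso-classes exist, hence only finitely many configurations; goodness at $v$ forces distinct blocks through $v$ to realise distinct configurations, so the set of branches at $v$ is a subset of this finite pool.

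To conclude, a core $\sigma$-tree of diameter $\leq n$ rooted at any of its vertices has depth $\leq n$, so by the auxiliary induction there are only finitely many iso-classes of rooted versions, hence a fortiori finitely many iso-classes of unrooted core $\sigma$-trees of diameter $\leq n$. The step I expect to require the most care is justifying the fold retraction: one has to check that defining the map as $\phi$ on one branch and the identity outside really yields a homomorphism of $\BT$, which reduces to noting that the two branches meet only at $v$ (a consequence of $\ig(\BT)$ being a tree) and that blocks wholly inside the folded branch land in genuine blocks of $\BT$ because $\phi$ is an isomorphism onto the target branch fixing $v$.
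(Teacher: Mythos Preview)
Your approach is essentially the paper's: pass to rooted trees, induct on depth, and use the core hypothesis to force distinct hyperedges at the root to carry distinct ``labels'' (relation symbol, position of the root, tuple of hanging sub-trees), so that the rooted tree is determined by a \emph{subset} of a finite label pool.

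One step needs more care than you give it, and the paper's proof shares the same imprecision. Your claim that ``goodness is inherited by each sub-branch'' is false as stated: the digraph with arcs $u\to x$, $x\to y$, $z\to y$ is good, but the sub-tree obtained by deleting $u$ has two isomorphic branches at $y$. Likewise the paper silently treats the hanging sub-trees as belonging to the $t_{n-1}$ ``core rooted trees'', yet a sub-tree of a core need not be a core: the digraph with arcs $(u,x),(x,y),(z,y),(z,w),(w,t)$ is a core, but after deleting $u$ the vertex $x$ folds onto $z$. The clean fix, for both arguments, is to run the induction over \emph{rooted cores}: rooted trees admitting no non-surjective endomorphism fixing the root. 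Every core is a rooted core at every vertex; rooted-core-ness \emph{is} inherited by each hanging sub-tree $(\BT_j,x_j)$ (extend an offending endomorphism of $\BT_j$ by the identity outside---since $x_j$ is fixed, the connecting block is preserved); and a rooted core has pairwise non-isomorphic branches at its root (else fold one onto the other, fixing the root). With this adjustment your argument goes through and coincides with the paper's.
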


\begin{proof} We will show that the number $t_n$ of core {\em rooted}
trees in which the distance to the root is at most $n$ is finite.
Let $m$ be the number of relations in $\sigma$ and let $r$ be the
maximum arity of a relation in $\sigma$. We have $t_0 \leq 2^m$,
with equality only if $r = 1$. Now suppose that $t_{n-1}$ is
finite. For a rooted tree $\BT$ in which the distance to the root
$u$ is at most $n$, we can encode each hyperedge $(x_1, \ldots,
x_{r'})$ to which $u$ belongs by the name of the relation
$R_i(\BT)$ containing it (there are at most $m$ choices), the
index $i$ such that $u = x_i$ (there are at most $r$ choices) and
the trees rooted at $x_j, j \neq i$ branching away from $u$ (there
are at most $t_{n-1}^{r-1}$ choices). If $\BT$ is a core, no two
hyperedges can have the same label and $\BT$ is determined by its
set of labels of hyperedges containing $u$. Therefore $t_n \leq
2^{m\cdot r \cdot t_{n-1}^{r-1}}$. \end{proof}

\subsection{Complete sets of obstructions} \label{subsectioncomplete}

The $\sigma$-structure $\BB$ is an {\em obstruction} for the
$\sigma$-structure $\BA$ if there is no homomorphism from $\BB$ to
$\BA$. A family ${\mathcal F}$ of obstructions for $\BA$ is called
a {\em complete set of obstructions} if for every
$\sigma$-structure $\BB$ that does not admit a homomorphism to
$\BA$ there exists some $\BC \in {\mathcal F}$ which admits a
homomorphism to $\BB$. The structure $\BA$ is said to have {\em
tree duality} if it admits a complete set of obstructions
consisting of trees, and {\em finite duality} if it admits a
finite complete set of obstructions. According to \cite{nestar},
for every finite family ${\mathcal F}$ of $\sigma$-trees, there
exists a $\sigma$-structure $\BA_{\mathcal F}$ which admits
${\mathcal F}$ as a complete set of obstructions; and conversely
every $\sigma$-structure $\BA$ with finite duality admits a
finite complete set of obstructions consisting of trees. Thus the
structures with finite duality form a subclass of the structures
with tree duality, and there is one such core structure for every
finite set of tree obstructions.

An obstruction $\BB$ for $\BA$ is called {\em critical} if every
proper substructure of $\BB$ admits a homomorphism to $\BA$. It is
clear that a critical obstruction is a core, and that every
obstruction contains, as a substructure, a critical obstruction.

\begin{lemma} \label{fd=bd}
A $\sigma$-structure $\BA$ has finite duality if and only if
there is an upper bound on the diameter of its critical
obstructions.
\end{lemma}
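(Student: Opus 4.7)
The plan is to prove both implications, with the forward one being straightforward.

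For $(\Rightarrow)$, I would assume $\BA$ has finite duality and invoke the theorem of \cite{nestar} recalled in Subsection~\ref{subsectioncomplete} to fix a finite complete set $\{T_1,\dots,T_k\}$ of \emph{tree} obstructions of $\BA$. Setting $d = \max_i \mathrm{diam}(T_i)$, given a critical obstruction $\BB$, some $T_i$ admits a homomorphism $h$ into $\BB$, and I would argue that the image substructure $\BC$ of $h$ must equal $\BB$: if $\BC$ were a proper substructure, then by criticality $\BC \to \BA$, whence $T_i \to \BC \to \BA$, contradicting that $T_i$ is an obstruction. Thus $h$ is surjective on both universe and hyperedges, inducing a surjection $\ig(T_i) \to \ig(\BB)$ of incidence multigraphs, and since such a map cannot increase distance, $\mathrm{diam}(\BB) \leq \mathrm{diam}(T_i) \leq d$.

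For $(\Leftarrow)$, assuming every critical obstruction of $\BA$ has diameter at most $d$, I would define $\mathcal T$ as the set of core $\sigma$-tree obstructions of $\BA$ of diameter at most $d$ --- finite by Lemma~\ref{treesofagivendiameter} --- and argue that $\mathcal T$ is a complete set of obstructions, yielding finite duality. Given an arbitrary obstruction $\BB$, I would first reduce to the case that $\BB$ is connected (any component that fails to map to $\BA$ is itself an obstruction, and any tree mapping into it also maps into $\BB$). Then I would apply Lemma~\ref{largegirth} with $\BB$ and $\BA$ playing the roles of the two structures in that lemma, choosing $n$ sufficiently large (say $n = 2d+2$) to obtain a structure $\BB^*$ of girth $>n$ with $\BB^* \to \BB$ and $\BB^* \not\to \BA$; passing to a connected component, I may also assume $\BB^*$ is connected. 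Extracting a critical substructure $\BC$ of $\BB^*$, I note that $\BC$ is a critical obstruction, necessarily connected (a disconnected one would have a component which is itself a strictly smaller obstruction, contradicting criticality), has diameter $\leq d$ by hypothesis, and has girth $>n$ because substructures cannot decrease girth. By the remark following Lemma~\ref{largegirth}, $\BC$ is then acyclic, hence a $\sigma$-tree and, as a critical obstruction, a core; so $\BC \in \mathcal T$, and the composition $\BC \hookrightarrow \BB^* \to \BB$ exhibits a member of $\mathcal T$ mapping into $\BB$.

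The main obstacle I anticipate is quantifying the acyclicity step for $\BC$: one must convert structure-theoretic girth and diameter into their incidence-multigraph values (each doubled) and apply the elementary inequality $\mathrm{girth}(G)\leq 2\,\mathrm{diam}(G)+1$ for any connected graph $G$ containing a cycle, from which any threshold $n>2d$ already forces acyclicity. The remaining manipulations --- reductions to connected structures, passage to critical substructures, and the surjectivity argument in the forward direction --- should all be routine.
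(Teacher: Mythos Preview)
Your proof is correct and follows essentially the same route as the paper's: the forward direction bounds the diameter of a critical obstruction by that of a member of a finite complete set mapping onto it (the paper does not invoke the tree result from \cite{nestar} explicitly, but the argument is identical), and the backward direction applies Lemma~\ref{largegirth} to produce a high-girth structure whose critical sub-obstruction must be acyclic, hence a core tree of diameter at most $d$ lying in the finite set from Lemma~\ref{treesofagivendiameter}. Your preliminary reductions to connected $\BB$ and $\BB^*$ are unnecessary---a critical obstruction is automatically connected---but harmless, and your girth/diameter bookkeeping is correct.
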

\begin{proof}
Clearly, if $\BA$ has finite duality, then the maximum diameter
of an obstruction in a finite complete set of obstructions for
$\BA$ is an upper bound on the diameter of all critical
obstructions for $\BA$. Conversely, suppose that the critical
obsructions for $\BA$ have diameter at most $m$. Let ${\mathcal
F}$ be the set of core $\sigma$-trees of diameter at most $m$
which do not admit a homomorphism to $\BA$. By
Lemma~\ref{treesofagivendiameter}, ${\mathcal F}$ is finite. By
Lemma~\ref{largegirth}, for any $\sigma$-structure $\BB$ which
does not admit a homomorphism to $\BA$, there exists a structure
$\BC$ of girth at least $2m + 2$ which admits a homomorphism to
$\BB$ but not to $\BA$. A critical obstruction for $\BA$ contained
in $\BC$ cannot contain a cycle hence it must be a tree $\BT$ of
diameter at most $m$. Therefore $\BT \in {\mathcal F}$; this shows
that ${\mathcal F}$ is a finite complete set of obstructions for
$\BA$.
\end{proof}

For a $\sigma$-structure $\BA$, the problem $\BA$-CSP
 consists of determining whether an input structure
$\BB$ admits a homomorphism to $\BA$. It is said to be {\em
first-order definable} if there exists a first-order sentence
$\Phi$ (in the language of $\sigma$)  which is true on $\BB$ if
and only if $\BB$ admits a homomorphism to $\BA$. By a result of
Atserias \cite{ats}, $\BA$-CSP is first-order definable if and
only if $\BA$ has finite duality, hence we have the following
equivalences:
\begin{theorem} \label{foequivalences}
Let $\BA$ be a $\sigma$-structure. Then the following are
equivalent.
\begin{itemize}
\item[1.] $\BA$-CSP is first-order definable; \item[2.] $\BA$ has
finite duality; \item[3.] $\BA$ has a finite complete set of
obstructions consisting of trees; \item[4.] The critical
obstructions of $\BA$ have bounded diameter.
\end{itemize}
\end{theorem}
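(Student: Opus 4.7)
The plan is to assemble the four equivalences from results already established or cited, rather than to produce a new direct argument. The theorem is essentially an organizational statement, tying together Atserias's descriptive-complexity result, the Nešetřil--Tardif result on tree obstructions, and the preceding Lemma~\ref{fd=bd}. Accordingly I would proceed by giving a short chain of implications $1\Leftrightarrow 2$, $2\Leftrightarrow 3$, $2\Leftrightarrow 4$, each citing the appropriate ingredient.

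First, $1\Leftrightarrow 2$ is exactly Atserias's theorem \cite{ats}, which was quoted in the paragraph immediately preceding the statement. No additional argument is needed beyond this citation.

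Second, for $2\Leftrightarrow 3$, the direction $3\Rightarrow 2$ is immediate, since a finite complete set of obstructions consisting of trees is in particular a finite complete set of obstructions, so $\BA$ has finite duality by definition. The converse $2\Rightarrow 3$ is the result of Nešetřil and Tardif quoted at the beginning of Subsection~\ref{subsectioncomplete}: every $\sigma$-structure with finite duality admits a finite complete set of obstructions consisting of trees (and conversely, every finite family of trees arises as the obstruction set of some structure). Finally, $2\Leftrightarrow 4$ is precisely Lemma~\ref{fd=bd}, which has already been proved and closes the cycle.

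There is no real obstacle inside the theorem itself; the main technical work lives upstream, in Lemma~\ref{fd=bd}. That lemma combines the Erdős-type sparsification of Lemma~\ref{largegirth} (to produce high-girth pre-images of non-homomorphic structures) with the finiteness statement of Lemma~\ref{treesofagivendiameter} (bounding the number of core trees of bounded diameter), and so provides the only non-trivial content behind the compiled theorem. Once those are in hand, the assembly sketched above is routine.
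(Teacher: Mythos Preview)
Your proposal is correct and matches the paper's own treatment exactly: the theorem is stated without a separate proof, as an immediate consequence of Atserias's result for $1\Leftrightarrow 2$, the Ne\v set\v ril--Tardif result cited in Subsection~\ref{subsectioncomplete} for $2\Leftrightarrow 3$, and Lemma~\ref{fd=bd} for $2\Leftrightarrow 4$. There is nothing to add.
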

We are mostly interested in the ``meta-problem'' of deciding
whether an input structure $\BA$ has a first-order definable CSP.
The equivalences of Theorem~\ref{foequivalences} are not a usable
decision procedure, but they will be used in the next two sections
to find such a procedure. As a benchmark we state here Feder and
Vardi's decision procedure for tree duality. Given a structure
$\BA = \langle A;R_1(\BA),\dots,R_m(\BA)\rangle$, we define the
structure ${\mathcal U}(\BA)  = \langle U;R_1({\mathcal
U}(\BA)),\dots,R_m({\mathcal U}(\BA))\rangle$, where $U$ is the
set of all nonempty subsets of $A$, and for $i = 1, \ldots, m$,
$R_i({\mathcal U}(\BA))$ is the set of all $r_i$-tuples $(X_1,
\ldots, X_{r_i})$ such that for all $j \in \{1, \ldots, r_i\}$ and
$x_j \in X_j$ there exist $x_k \in X_k, k \in \{1, \ldots,
r_i\}\setminus \{j\}$ such that $(x_1, \ldots, x_{r_i}) \in
R_i(\BA)$.
\begin{theorem}[\cite{fedvar2} Theorem 21] \label{tddecidable}
A $\sigma$-structure $\BA$ has tree duality if and only if there
exists a homomorphism from ${\mathcal U}(\BA)$ to $\BA$.
\end{theorem}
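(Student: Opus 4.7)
My plan is to prove both directions by exploiting the $\forall\exists$ shape of the relations of $\mathcal U(\BA)$, which is tailored to the way Lemma~\ref{treedecomposition} builds a tree one hyperedge at a time.

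Forward direction. Suppose $\BA$ has tree duality. I would establish the key intermediate claim that every $\sigma$-tree $\BT$ admitting a homomorphism $\phi:\BT\to\mathcal U(\BA)$ also admits a homomorphism to $\BA$. The proof proceeds by induction on the number of hyperedges, using the sequence $\BT_1\subseteq\cdots\subseteq\BT_n=\BT$ from Lemma~\ref{treedecomposition} to construct a ``selector'' $\psi:\BT_j\to\BA$ with $\psi(x)\in\phi(x)$ at every vertex $x$. The base case applies the defining $\forall\exists$ condition of $R_i(\mathcal U(\BA))$ to the single hyperedge of $\BT_1$. In the inductive step, the new hyperedge $(x_1,\ldots,x_{r_i})$ of $\BT_{j+1}$ shares a unique vertex $x_\ell$ with $\BT_j$, so applying the $\forall\exists$ condition at position $\ell$ with the already-chosen value $\psi(x_\ell)\in\phi(x_\ell)$ yields acceptable values for the remaining positions, which can be safely assigned as the other $x_k$ are new. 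Once the claim is in hand, no tree obstruction of $\BA$ can map to $\mathcal U(\BA)$, so the complete family of tree obstructions granted by tree duality forces $\mathcal U(\BA)\to\BA$.

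Backward direction. Suppose $h:\mathcal U(\BA)\to\BA$. To obtain tree duality I would prove the contrapositive: if every tree that maps to $\BB$ also maps to $\BA$, then $\BB\to\BA$. For each $b\in B$ define
\[
\phi(b)=\{\,a\in A : \text{for every rooted tree }(\BT,t)\text{ and hom.\ }\alpha:\BT\to\BB\text{ with }\alpha(t)=b,\ \exists\,\beta:\BT\to\BA\text{ with }\beta(t)=a\,\}.
\]
First I would check $\phi(b)\neq\emptyset$: if not, pick for each $a\in A$ a witnessing rooted tree $(\BT_a,t_a)$, glue them at their roots into a single rooted tree $(\BT^*,t^*)$ (which remains a tree since wedging at a vertex preserves acyclicity of $\ig(\cdot)$), and apply the hypothesis to $\BT^*\to\BB$ for a contradiction. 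Second, I would show that $\phi$ is a homomorphism $\BB\to\mathcal U(\BA)$: given $(b_1,\ldots,b_r)\in R_i(\BB)$, an index $j$, and $a\in\phi(b_j)$, build a tree consisting of a single central hyperedge $(x_1,\ldots,x_r)$ of type $R_i$ mapped to $(b_1,\ldots,b_r)$ and, at each $x_k$ with $k\neq j$, attach for every $a'\notin\phi(b_k)$ a rooted subtree witnessing $a'\notin\phi(b_k)$. Feeding this tree into $\phi(b_j)\ni a$ produces a homomorphism to $\BA$ whose value at each $x_k$ is forced to lie in $\phi(b_k)$, yielding the required tuple in $R_i(\BA)$. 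Composing $\phi$ with $h$ gives $\BB\to\BA$.

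The main technical obstacle is the construction in the backward direction. One must simultaneously forbid every element outside $\phi(b_k)$ at every non-$j$ coordinate while keeping the resulting object a genuine $\sigma$-tree; this is delicate because several witnessing subtrees are attached at the same central vertex $x_k$ and because trees are required to be acyclic in the incidence multigraph rather than merely in the Gaifman graph. The resolution is that each attachment identifies the root of an individual subtree with a single vertex of the central hyperedge, so the incidence multigraph remains a tree, and since hyperedges in trees carry no repeated entries, the central hyperedge itself contributes no cycle. Once this bookkeeping is accepted, the $\forall\exists$ definition of $\mathcal U(\BA)$ is exactly what the one-hyperedge-at-a-time induction consumes, and the rest of the argument is essentially forced.
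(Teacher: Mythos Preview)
Your forward direction is essentially the paper's: it is exactly Lemma~\ref{treetoua} (proved there with the same induction along the decomposition of Lemma~\ref{treedecomposition}) followed by the implication (i)$\Rightarrow$(ii) in the theorem of Section~3.

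Your backward direction is correct but takes a genuinely different route. The paper does not go directly from a homomorphism $h:\mathcal U(\BA)\to\BA$ to tree duality; instead it passes through the algorithmic condition~(iii), the hyperedge consistency check. Lemma~\ref{hcc} shows that the check succeeds on $\BB$ iff $\BB\to\mathcal U(\BA)$, and that when it fails one can build, recursively as elements are deleted from lists, rooted trees $\BT_{a,b}$ which map to $\BB$ sending the root to $b$ but admit no homomorphism to $\BA$ sending the root to $a$; gluing these at the root when a list empties gives the tree obstruction. You bypass the algorithm and define $\phi(b)$ declaratively as the set of tree-consistent images; your $\phi(b)$ is precisely the stable list the consistency check would output, and your wedge-of-witnesses constructions (for nonemptiness and for the $\forall\exists$ homomorphism condition) are the static counterparts of the paper's recursively grown $\BT_{a,b}$. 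What you gain is a cleaner proof of (i)$\Leftrightarrow$(ii) that never mentions an algorithm; what the paper's detour buys is the simultaneous identification of the polynomial-time procedure that actually decides $\BA$-CSP when $\BA$ has tree duality. The bookkeeping point you flag---that attaching several witness subtrees at the same vertex of a central hyperedge with pairwise distinct entries keeps $\ig(\cdot)$ acyclic---is handled correctly.
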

This proves that determining whether a given structure $\BA$ has
tree duality is decidable, since a search for a homomorphism from
${\mathcal U}(\BA)$ to $\BA$ can be done in finite time. In
section \ref{sectionconstructions}, we provide similar
``construction-and-homomorphism'' characterisations of first-order
definable constraint satisfaction problems. However, we must first
clear up a technical point concerning tree duality: indeed, Feder
and Vardi's definition of a tree given in \cite{fedvar2} (at the
bottom of page 79) is slightly more general than the one given
here, as it allows parallel edges in the incidence multigraph of a
tree. Nonetheless the corresponding concepts of ``tree duality''
turn out to be equivalent, as we show in the next section.

\section{Tree Duality}

In this section we prove that the notion of tree duality is the
same whether we use the notion of tree as defined here or as
defined in \cite{fedvar2}.

\begin{lem} \label{treetoua}
Let $\BT$ be a tree. Then $\BT$ admits a homomorphism to $\BA$ if
and only if $\BT$ admits a homomorphism to ${\mathcal U}(\BA)$.
\end{lem}

\proof First note that the singletons induce a copy of $\BA$ in
${\mathcal U}(\BA)$. Thus, if a tree $\BT$ admits a homomorphism
to $\BA$, it also admits a homomorphism to ${\mathcal U}(\BA)$.

Conversely, suppose that $\phi: \BT \rightarrow {\mathcal U}(\BA)$
is a homomorphism. Let $\BT = \BT_n, \BT_{n-1}, \ldots,  \BT_1$ be
a decomposition of $\BT$ as described in Lemma
\ref{treedecomposition}. We can define a sequence of homomorphisms
$\psi_i : \BT_i \rightarrow \BA$ as follows: $\BT_1$ has a single
hyperedge $(x_1,\dots,x_r) \in R(\BA)$ for some $R \in \sigma$.
Since $\phi$ is a homomorphism, by definition of ${\mathcal
U}(\BA)$ there exist $\psi_1(x_j) \in \phi(x_j)$, $j = 1, \ldots,
r$, such that $(\psi_1(x_1),\dots,\psi_1(x_r)) \in R({\mathcal
U}(\BA))$. This defines a homomorphism $\psi_1 : \BT_1 \rightarrow
\BA$.

Now suppose that $i < n$ and $\psi_i : \BT_i \rightarrow \BA$ is
already defined. $\BT_{i+1}$ is obtained from $\BT_i$ by adding
one hyperedge $(x_1, \ldots, x_r) \in R(\BT)$ for some $R \in
\sigma$, where for exactly one index $j_0$, $x_{j_0}$ belongs to
the universe of $\BT_i$. Since $\phi$ is a homomorphism, by
definition of ${\mathcal U}(\BA)$ there exist $\psi_{i+1}'(x_j)
\in \phi(x_j)$, $j = 1, \ldots, r$, such that
$(\psi_{i+1}'(x_1),\dots,\psi_{i+1}'(x_r)) \in R({\mathcal
U}(\BA))$ and $\psi_{i+1}'(x_{j_0}) = \psi_i(x_{j_0})$. Then
$\psi_i \cup \psi_{i+1}' = \psi_{i+1}$ is a well defined
homomorphism from $\BT_{i+1}$ to $\BA$. Continuing in this way, we
eventually define a homomorphism $\psi = \psi_n$ from $\BT$ to
$\BA$. \qed

The {\em $\BA$ hyperedge consistency check} is the following
polynomial-time algorithm. At the start every element $b \in B$
is assigned a list consisting of all the
``plausible'' images of $b$ under such a homomorphism; initially
this list is $A$.
Then, the elements of $B$ are cyclically
inspected to check whether their lists are still consistent with
the local information: For every element $b$, an element $a$ in
the current list of $b$ is removed if there exists a relation $R
\in \sigma$ and $(b_1, \ldots, b_r) \in R(\BB)$ with $b_j = b$ for
some $j$ such that there exists no $(a_1, \ldots, a_r) \in R(\BA)$
with $a_j = a$ and $a_i$ in the list of $b_i$ for $i \neq j$. The
process continues until the lists stabilise. If at some point the
list of an element becomes empty, the $\BA$ hyperedge consistency
check is said to fail on $\BB$, and otherwise it is said to
succeed on $\BB$. The following result gives an interpretation of
these possible outcomes.
\begin{lem} \label{hcc}
The $\BA$ hyperedge consistency check succeeds on $\BB$ if and
only if there exists a homomorphism from $\BB$ to ${\mathcal
U}(\BA)$, and it fails on $\BB$ if and only if there exists a tree
$\BT$ which admits a homomorphism to $\BB$ but no homomorphism to
$\BA$.
\end{lem}

\proof If the $\BA$ hyperedge consistency check succeeds on $\BB$,
then by definition of ${\mathcal U}(\BA)$ the map
$\phi: \BB \rightarrow {\mathcal U}(\BA)$ assigning to
every $b \in B$ its final list
$\phi(b)$ is a homomorphism. Conversely, if $\phi: \BB \rightarrow
{\mathcal U}(\BA)$ is a homomorphism, then again by definition of
${\mathcal U}(\BA)$, the $\BA$ hyperedge consistency check will
never eliminate an element $c \in \phi(b)$ from the list of any
element $b$ of $\BB$, hence it will succeed.

If there exists a tree $\BT$ which admits a homomorphism to $\BB$
but not to $\BA$, then by Lemma \ref{treetoua}, $\BT$ does not
admit a homomorphism to ${\mathcal U}(\BA)$, thus $\BB$ does not
admit a homomorphism to ${\mathcal U}(\BA)$. By the previous
paragraph this implies that the
 $\BA$ hyperedge consistency check must fail on $\BB$.
Conversely, suppose that the
 $\BA$ hyperedge consistency check fails on $\BB$.
We construct a tree $\BT$ as follows while running the hyperedge
consistency check.

When deleting an element $a$ from the list of an element $b$ of
$\BB$, we define a rooted tree $\BT_{a,b}$ with root $r_{a,b}$
with the following properties.
\begin{itemize}
\item[(i)] There is a homomorphism from $\BT_{a,b}$ to $\BB$
mapping $r_{a,b}$ to $b$, \item[(ii)] there is no homomorphism
from $\BT_{a,b}$ to $\BA$
 mapping $r_{a,b}$ to $a$.
\end{itemize}
Indeed, $a$ is deleted from the list of $b$ because we found a
relation $R \in \sigma$ and $(b_1, \ldots, b_r) \in R(\BB)$ with
$b_j = b$ for some $j$ such that there exists no $(a_1, \ldots,
a_r) \in R(\BA)$ with $a_j = a$ and $a_i$ in the list of $b_i$ for
$i \neq j$. We then put elements $c_1, \ldots, c_r$ in the
universe of $\BT_{a,b}$ and $(c_1, \ldots, c_r)$ in $R(\BT_{a,b})$,
and select the root $r_{a,b} = c_j$. If $R(\BA)$ does not contain
any $r$-tuple $(a_1,\ldots,a_r)$ such that $a_j = a$, then we are
done, since $\BT_{a,b}$ clearly satisfies properties (i) and (ii).
Otherwise, for every $(a_1,\ldots,a_r)$ such that $a_j = a$, there
exists at least one index $i$ such that $a_i$ is already removed
from the list of $b_i$ whence the tree $\BT_{a_i,b_i}$ with
properties (i) and (ii) is already defined; we then add a copy of
$\BT_{a_i,b_i}$ to $\BT_{a,b}$, by identifying $r_{a_i,b_i}$ to
$c_i$. Thus we get a tree $\BT_{a,b}$ such that the map $\phi:
\{c_1,\ldots, c_r\} \rightarrow \{b_1,\ldots,b_r\}$ defined by
$\phi(c_i) = b_i$ extends to a homomorphism from $\BT_{a,b}$ to
$\BB$. However, any homomorphism from $\BT_{a,b}$ to $\BA$
mapping $r_{a,b}$ to $a$ would also
map the root of some previously defined $\BT_{a_i,b_i}$ to $a_i$,
which is impossible. Thus $\BT_{a,b}$ satisfies properties (i) and
(ii).

When the list of some element $b$ of $\BB$ becomes empty, we can
construct a tree $\BT$ by identifying the roots of all the trees
$\BT_{a,b}$ to a new element $r$. We then find a homomorphism from
$\BT$ to $\BB$ by mapping $r$ to $b$ and extending independently
on each $\BT_{a,b}$. However a homomorphism from $\BT$ to $\BA$
would need to map $r$ to some element $a$, hence induce a
homomorphism from $\BT_{a,b}$ to $\BA$ mapping its root to $a$,
which is impossible. Therefore, when the hyperedge consistency check
fails, there exists a tree $\BT$ which admits a homomorphism to
$\BB$ but not to $\BA$. \qed

In terms of dualities, these results can be summarized as follows.

\begin{thm}
For a $\sigma$-structure $\BA$, the following properties are
equivalent:
\begin{itemize}
\item[(i)] $\BA$ has tree duality, \item[(ii)] ${\mathcal U}(\BA)$
admits a homomorphism to $\BA$, \item[(iii)] The $\BA$ hyperedge
consistency check decides the $\BA$-CSP problem.
\end{itemize}
\end{thm}

\proof \hfill
\begin{itemize}
\item[(i)] $\Rightarrow$ (ii)\ Suppose that $\BA$ has tree duality.
By Lemma \ref{treetoua}, there does not exist a tree that admits a
homomorphism to ${\mathcal U}(\BA)$ but not to $\BA$, hence there
exists a homomorphism from ${\mathcal U}(\BA)$ to $\BA$.
\item[(ii)] $\Rightarrow$ (iii)\ By Lemma \ref{hcc} the $\BA$
hyperedge consistency check decides the ${\mathcal U}(\BA)$-CSP
problem. If there exists a homomorphism from  ${\mathcal U}(\BA)$
to $\BA$, then since there also exists a homomorphism from $\BA$
to ${\mathcal U}(\BA)$ the $\BA$-CSP problem is equivalent to the
${\mathcal U}(\BA)$-CSP problem. \item[(iii)] $\Rightarrow$ (i)\ By
Lemma \ref{hcc}, when the $\BA$ hyperedge consistency check fails
on a structure $\BB$, there exists a tree $\BT$ which admits a
homomorphism to $\BB$ but none to $\BA$. Thus if the $\BA$
hyperedge consistency check decides the $\BA$-CSP problem, then
$\BA$ has tree duality.
\end{itemize}
\qed

Property (ii) shows that these three properties are decidable,
since a greedy search for a homomorphism from ${\mathcal U}(\BA)$
to $\BA$ can be done in finite time. Property (iii) gives a
polynomial algorithm for the corresponding CSP's, and property (i)
shows that these problems contain the class of CSP's with finite
duality, that is, the first-order decidable CSP's.

\section{Constructions} \label{sectionconstructions}
\subsection{Quotients} \label{subsectionquotients}
Let $\BA = \langle A;R_1(\BA),\dots,R_m(\BA)\rangle$ be a
$\sigma$-structure and $\sim$ an equivalence relation on $A$. For
$a \in A$ we denote $a/\!\!\sim$ the $\sim$-equivalence class
containing $a$. The {\em quotient} $\BA/\!\!\sim$ of $\BA$ under
$\sim$ is the $\sigma$-structure whose universe is the set of
$\sim$-equivalence classes, where for $i = 1, \ldots, m$ we have
$(C_1, \ldots, C_{r_i}) \in R_i(\BA/\!\!\sim)$ if and only if
there exist $a_j \in C_j$, $j = 1, \ldots, r_i$ such that $(a_1,
\ldots, a_{r_i}) \in R_i(\BA)$. Note that the quotient map $q: \BA
\rightarrow \BA/\!\!\sim$ where $q(a) = a/\!\!\sim$ is a
homomorphism; in fact for every homomorphism $\phi: \BA
\rightarrow \BB$, there is a natural equivalence $\sim$ (the
``kernel'' of $\phi$) on $A$ and an injective homomorphism $\psi:
\BA/\!\!\sim \rightarrow \BB$ such that $\phi = \psi \circ q$.

Here we give a first application of quotients to reveal an
important structural property of cores with tree duality. A
$\sigma$-structure $\BA$ is called {\em rigid} if the identity is
the only homomorphism from $\BA$ to itself.
\begin{lemma} \label{coreisrigid} Let $\BA$ be a core with tree duality.
Then $\BA$ is rigid. \end{lemma}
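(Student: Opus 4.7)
The plan is to prove that every endomorphism $f$ of $\BA$ is the identity. Step one is the standard fact that any endomorphism of a finite core is surjective: iterating $f$ eventually yields an idempotent power $f^k$, whose image is a retract of $\BA$, hence equal to $A$ because $\BA$ is a core; thus $f^k = \mathrm{id}_A$ and $f$ is a bijection.

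Step two: suppose for contradiction that $f \neq \mathrm{id}$. Let $\sim$ be the orbit equivalence of the cyclic group $\langle f \rangle$ acting on $A$, and let $q: \BA \to \BA/\!\!\sim$ be the quotient map. Since $f$ moves some element, at least one orbit has size $\geq 2$, so $|A/\!\!\sim| < |A|$.

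Step three, the key step, is to produce a homomorphism $s: \BA/\!\!\sim \to \BA$. By tree duality, it suffices to show that any tree $\BT$ admitting a homomorphism $\phi : \BT \to \BA/\!\!\sim$ also admits one into $\BA$. I would build a lift $\psi: \BT \to \BA$ of $\phi$ (so $q \circ \psi = \phi$, i.e.\ $\psi(x) \in \phi(x)$ for every $x$) by induction along the decomposition $\BT = \BT_n \supseteq \BT_{n-1} \supseteq \cdots \supseteq \BT_1$ from Lemma~\ref{treedecomposition}. For the base case, the single hyperedge of $\BT_1$ is lifted directly from the definition of $\BA/\!\!\sim$. For the inductive step, when $\BT_{i+1}$ adds a hyperedge $(x_1, \ldots, x_r)$ whose unique old vertex is $x_{j_0}$, pick any tuple $(a_1, \ldots, a_r) \in R(\BA)$ with $a_j \in \phi(x_j)$, witnessing $(\phi(x_1),\ldots,\phi(x_r)) \in R(\BA/\!\!\sim)$; the values $a_{j_0}$ and $\psi_i(x_{j_0})$ then lie in the same $\langle f \rangle$-orbit, so applying a suitable power $f^k$ to the whole tuple aligns them while preserving membership in $R(\BA)$ (because $f^k$ is an automorphism of $\BA$) and membership in the classes $\phi(x_j)$ (because orbits are exactly the $\sim$-classes). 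Setting $\psi_{i+1}(x_j) := f^k(a_j)$ for $j \neq j_0$ extends $\psi_i$ appropriately.

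Step four closes the argument: composing $s$ with $q$ gives an endomorphism of $\BA$ whose image has size at most $|A/\!\!\sim| < |A|$, contradicting the surjectivity established in step one. The delicate part of the whole argument is the lifting in step three, where one must ensure that realigning each new hyperedge by an automorphism of $\BA$ does not conflict with values already fixed on $\BT_i$; this works because the alignment is applied only to the fresh vertices $x_j$ with $j \neq j_0$, which have no previously assigned values, so no compatibility with $\psi_i$ is broken.
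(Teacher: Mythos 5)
Your proposal is correct and follows essentially the same route as the paper: define the orbit equivalence $\sim$ of the cyclic group generated by the endomorphism, lift tree homomorphisms from $\BA/\!\!\sim$ to $\BA$ along the decomposition of Lemma~\ref{treedecomposition} using powers of the automorphism to align the overlapping vertex, invoke tree duality to get a homomorphism $\BA/\!\!\sim \to \BA$, and conclude that $\sim$ must be trivial. The only cosmetic difference is that you phrase the final contradiction via surjectivity of endomorphisms of a core, while the paper observes directly that a core admitting a homomorphism to a proper quotient is impossible.
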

\begin{proof} Suppose that $\tau: \BA \rightarrow \BA$
is a homomorphism. Since $\BA$ is a core, $\tau$ is an
automorphism of $\BA$ hence we can define an equivalence relation
$\sim$ on $A$ by putting $a \sim b$ if there exists an integer $p$
such that $\tau^p(a) = b$. We will show that every tree which
admits a homomorphism to $\BA/\!\!\sim$ also admits a homomorphism
to $\BA$.

Let $\BT$ be a tree which admits a homomorphism $\psi: \BT
\rightarrow \BA/\!\!\sim$. Let $\BT = \BT_n, \BT_{n-1}, \ldots,
\BT_1$ be the sequence of Lemma~\ref{treedecomposition}. For $k =
1, \ldots, n$, the restriction of $\psi$ to the universe of
$\BT_k$ is a homomorphism $\psi_k: \BT_k \rightarrow
\BA/\!\!\sim$; we recursively define a sequence $\phi_k: \BT_k
\rightarrow \BA$ of homomorphisms such that $\psi_k = q \circ
\phi_k$, where $q$ is the quotient map from $\BA$ to
$\BA/\!\!\sim$. First, $\BT_1$ has just one hyperedge $(x_1,
\ldots, x_{r_i}) \in R_i(\BT)$ for some $i$, and $(\psi(x_1),
\ldots, \psi(x_{r_i})) \in R_i(\BA/\!\!\sim)$. By definition of
quotients this means that there exist $y_j \in \psi(x_j), j = 1,
\ldots, r_i$ such that $(y_1, \ldots, y_{r_i}) \in R_i(\BA)$, thus
we can define $\phi_1: \BT_1 \rightarrow \BA$ by $\phi_1(x_j) =
y_j$. Now suppose that $\phi_{k-1}: \BT_{k-1} \rightarrow
\BA/\!\!\sim$ is already defined. $\BT_k$ is obtained from
$\BT_{k-1}$ by adding an hyperedge $(x_1, \ldots, x_{r_i}) \in
R_i(\BT)$ which has only one coordinate $x_{\ell}$ in the universe
of $\BT_{k-1}$. Again we have $(\psi(x_1), \ldots, \psi(x_{r_i}))
\in R_i(\BA/\!\!\sim)$ and there exist $y_j \in \psi(x_j), j = 1,
\ldots, r_i$ such that $(y_1, \ldots, y_{r_i}) \in R_i(\BA)$. Put
$a = \phi_{k-1}(x_{\ell}) \in a/\!\!\sim = \psi(x_{\ell})$. Then
$y_{\ell} \sim a$ hence by the definition of $\sim$ there exists a
power $p$ such that $\tau^p(y_{\ell}) = a$. Since $\tau$ is a
homomorphism, we then have $(\tau^p(y_1), \ldots, \tau^p(y_{r_i}))
\in R_i(\BA)$, and we can extend the definition of $\phi_{k-1}$ to
that of $\phi_k: \BT_k \rightarrow \BA$ by putting $\phi_k(z) =
\phi_{k-1}(z)$ if $z$ is in the universe of $\BT_{k-1}$, and
$\phi_k(x_j) = \tau^p(y_j), j = 1, \ldots, r_i$. Indeed $\phi_k$
is well defined since both definitions coincide on $x_{\ell}$, it
is a homomorphism since it preserves $(x_1, \ldots, x_{r_i}) \in
R_i(\BT)$ in addition to all the hyperedges preserved by
$\phi_{k-1}$, and $\phi_k(z) \in \psi_k(z)$ for all $z$ in the
universe of $\BT_k$ whence $\psi_k(z) = q \circ \phi_k(z)$. In
this way we eventually define a homomorphism $\phi = \phi_n$ from
$\BT = \BT_n$ to $\BA$.

Hence every tree which admits a homomorphism to $\BA/\!\!\sim$
also admits a homomorphism to $\BA$. Since $\BA$ has  tree duality
this implies that $\BA/\!\!\sim$ admits a homomorphism to $\BA$.
Since $\BA$ is a core which admits a homomorphism to
$\BA/\!\!\sim$, this implies that $\sim$ cannot identify vertices,
whence $\tau$ is the identity.
 \end{proof}

\subsection{Products and powers} \label{sectionproducts}
Given two $\sigma$-structures $\BA = \langle
A;R_1(\BA),\dots,R_m(\BA)\rangle$ and $\BB = \langle
B;R_1(\BB),\dots,R_m(\BB)\rangle$ their {\em product} is the
$\sigma$-structure
$$\BA \times \BB = \langle A\times B;
R_1(\BA \times \BB),\dots,R_m(\BA \times \BB)\rangle,$$ where for
$i = 1, \ldots, m$, $R_i(\BA \times \BB)$ consists of all tuples
$((a_1,b_1), \ldots, (a_{r_i},b_{r_i}))$ such that $(a_1, \ldots,
a_{r_i}) \in R_i(\BA)$ and $(b_1, \ldots, b_{r_i}) \in R_i(\BB) $.
Both projections $\pi_1: \BA \times \BB \rightarrow \BA$ and
$\pi_2: \BA \times \BB \rightarrow \BB$ are homomorphism and in
general for any $\sigma$-structure $\BC$ and any pair $\phi_1: \BC
\rightarrow \BA$, $\phi_2: \BC \rightarrow \BB$ of homomorphisms
there is a unique homomorphism $\phi: \BC \rightarrow \BA \times
\BB$ such that $\phi_1 = \pi_1 \circ \phi$ and $\phi_2 = \pi_2
\circ \phi$. The product is associative; the {\em $n$-th power}
$\BA^n$ of $\BA$ is the product of $n$ copies of $\BA$. For any $n
\geq 1$ an {\em $n$-ary operation on} $\BA$ is a homomorphism from
$\BA^n$ to $\BA$.

The {\em one-tolerant $n$-th power} $^1\BA^n$ of $\BA$ is the
$\sigma$-structure $\langle A^n; R_1(^1\BA^n), \ldots,
R_m(^1\BA^n) \rangle$ where for $i = 1, \ldots, m$, $R_i(^1\BA^n)$
consists of tuples $((a_{1,1}, \ldots, a_{1,n}), \ldots,
(a_{r_i,1}, \ldots, a_{r_i,n}))$ such that $\vert \{ k : (a_{1,k},
\ldots, a_{r_i,k}) \in R_i(\BA) \} \vert \geq n-1$. In other
words, $^1\BA^n$ is obtained from $\BA^n$ by adding to
$R_i(\BA^n)$ all hyperedges that are mapped to $R_i(\BA)$ by at
least $n-1$ of the projections. In particular, the projections are
not homomorphisms from $^1\BA^n$ to $\BA$ hence $^1\BA^n$ does not
necessarily admit a homomorphism to $\BA$. However notice that
removal of a coordinate is a homomorphism from $^1\BA^{n+1}$ to
$^1\BA^n$.

\begin{lemma} \label{homtolerantpower}
There exists a homomorphism from $^1\BA^{n+1}$ to $\BA$ if and
only if the critical obstructions of $\BA$ have at most $n$
hyperedges.
\end{lemma}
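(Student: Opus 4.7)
The plan is to exploit the defining feature of $^1\BA^{n+1}$: each of its hyperedges tolerates at most one ``bad'' projection among the $n+1$ coordinates. Both directions then reduce to a counting argument against $n+1$ coordinates.

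For the forward direction, assume $h\colon {}^1\BA^{n+1} \to \BA$ is a homomorphism, and suppose for contradiction that some critical obstruction $\BT$ has at least $n+1$ hyperedges. Fix $n+1$ of them, $e_1, \ldots, e_{n+1}$. For each $j$, the structure $\BT - e_j$ obtained by deleting $e_j$ from its relation has the same universe as $\BT$ but one fewer tuple, so the identity is a homomorphism $\BT - e_j \to \BT$ and this is a proper substructure; by criticality it admits a homomorphism $\phi_j\colon \BT - e_j \to \BA$. I would then stack these into the map $\Phi\colon T \to A^{n+1}$, $v \mapsto (\phi_1(v), \ldots, \phi_{n+1}(v))$. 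The key check is that $\Phi$ is a homomorphism into $^1\BA^{n+1}$: for any hyperedge $e$ of $\BT$ and any $j$ with $e \neq e_j$, the $j$-th projection of $\Phi(e)$ lies in the corresponding relation of $\BA$ because $\phi_j$ preserves $e$; this fails for at most one $j \in \{1,\ldots,n+1\}$, which is exactly what the definition of $^1\BA^{n+1}$ permits. Composing with $h$ produces a homomorphism $\BT \to \BA$, contradicting $\BT$ being an obstruction.

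For the converse, assume all critical obstructions of $\BA$ have at most $n$ hyperedges, and suppose for contradiction that $^1\BA^{n+1}$ admits no homomorphism to $\BA$. Then $^1\BA^{n+1}$ is itself an obstruction, hence contains some critical obstruction $\BC$ as a substructure, and by hypothesis $\BC$ has at most $n$ hyperedges. Each hyperedge of $\BC$ is inherited from $^1\BA^{n+1}$, so at most one of its $n+1$ coordinates is ``bad''. A pigeonhole count of at most $n$ bad coordinates against $n+1$ available coordinates produces some $j$ that is good for every hyperedge of $\BC$, and then the coordinate projection $\pi_j\colon \BC \to \BA$ is a homomorphism, contradicting that $\BC$ is an obstruction.

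The main point requiring care is in the first direction: one must identify the right notion of ``proper substructure'' of $\BT$ (namely $\BT$ with a single tuple removed from one relation, keeping the universe intact, which is legitimate under the general definition of substructure in the preliminaries) and then verify directly from the definition of $^1\BA^{n+1}$ that the stacked map $\Phi$ is a homomorphism. Once this bookkeeping is performed, no further machinery, such as trees, girth, or quotients, is required.
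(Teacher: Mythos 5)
Your proof is correct and follows essentially the same approach as the paper: removing each hyperedge in turn, using criticality to obtain partial homomorphisms, stacking them into $^1\BA^{n+1}$, and composing in one direction; and a pigeonhole on at most one bad coordinate per hyperedge in the other. The only cosmetic difference is that you pick exactly $n+1$ hyperedges and land directly in $^1\BA^{n+1}$, whereas the paper stacks all $m$ into $^1\BA^{m}$ and then invokes the coordinate-deletion homomorphism $^1\BA^{m} \to {^1\BA^{n+1}}$.
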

\begin{proof}
Let $\BC$ be a critical obstruction of $\BA$ with $m$ distinct
hyperedges $e_1, \ldots, e_{m}$, $m > n$. Then for $k = 1, \ldots,
m$, the $\sigma$-structure $\BC_j$ obtained from $\BC$ by removing
$e_k$ (without changing the universe) admits a homomorphism
$\phi_k$ to $\BA$. By definition of $^1\BA^{m}$, the map $\phi =
(\phi_1, \ldots, \phi_{m})$ is a homomorphism from $\BC$ to
$^1\BA^{m}$. Therefore there is no homomorphism from $^1\BA^{m}$
to $\BA$, and in particular none from $^1\BA^{n+1}$ to $\BA$.

Conversely, suppose that there is no homomorphism from
$^1\BA^{n+1}$ to $\BA$. Then there exists a critical obstruction
$\BC$ of $\BA$ which admits a homomorphism $\phi$ to
$^1\BA^{n+1}$. For every coordinate $k = 1, \ldots, n+1$, there
exists an hyperedge $e_k$ of $\BC$ which is not respected by
$\pi_k \circ \phi$, since $\pi_k \circ \phi$ is not a homomorphism
from $\BC$ to $\BA$. By the definition of $^1\BA^{n+1}$, $e_k$ is
respected by $\pi_j \circ \phi$ for every $j \neq k$, whence $e_j
\neq e_k$ for $j \neq k$. Therefore $\BC$ has at least $n+1$
hyperedges.
\end{proof}
\begin{corollary} \label{1antoa}
A $\sigma$-structure $\BA$ has finite duality if and only if
there exists a positive integer $n$ such that $^1\BA^n$ admits a
homomorphism to $\BA$.
\end{corollary}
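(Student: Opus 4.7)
The plan is to reduce the corollary to Lemma~\ref{homtolerantpower}, which already supplies the equivalence ``$^1\BA^{n+1}$ admits a homomorphism to $\BA$ if and only if every critical obstruction of $\BA$ has at most $n$ hyperedges''. Thus it suffices to show that $\BA$ has finite duality if and only if there is a uniform bound on the number of hyperedges appearing in its critical obstructions.

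For the forward direction, I would fix a finite complete set of obstructions $\mathcal{F}$ for $\BA$ and show that, for any critical obstruction $\BC$, the number of hyperedges of $\BC$ is bounded by the maximum number of hyperedges of structures in $\mathcal{F}$. Since $\BC$ admits no homomorphism to $\BA$, some $\BB \in \mathcal{F}$ admits a homomorphism $\psi : \BB \rightarrow \BC$. The key step is to consider the substructure $\BC'$ of $\BC$ with universe $\psi(B)$ and hyperedges $R_i(\BC') = \psi(R_i(\BB))$: then $\psi$ factors as a surjective homomorphism $\BB \rightarrow \BC'$ followed by the inclusion $\BC' \hookrightarrow \BC$, so $\BC'$ is itself an obstruction. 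By minimality of $\BC$ we must have $\BC' = \BC$, and in particular $\BC$ has at most as many hyperedges as $\BB$, yielding the desired uniform bound.

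For the reverse direction, I would first observe that a critical obstruction $\BC$ has no isolated elements: if some $a \in C$ appeared in no hyperedge, the induced substructure on $C \setminus \{a\}$ would be a proper substructure of $\BC$, hence admit a homomorphism to $\BA$ by criticality; extending this homomorphism to $\BC$ by sending $a$ to any element of $A$ would contradict that $\BC$ is an obstruction. Consequently, a critical obstruction with at most $N$ hyperedges has at most $Nr$ elements, where $r$ is the maximum arity of a relation in $\sigma$; since there are only finitely many $\sigma$-structures of this size up to isomorphism, there are only finitely many critical obstructions of $\BA$. Because every obstruction contains a critical one as a substructure, this finite family is a finite complete set of obstructions, establishing finite duality.

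The main subtlety, and what I expect to be the only non-routine point, is the choice of the substructure $\BC'$ in the forward direction: its hyperedges must be taken to be $\psi(R_i(\BB))$ rather than the relations induced on $\psi(B)$, so that $\psi$ restricts to a surjection $\BB \rightarrow \BC'$ and the criticality of $\BC$ can be used to force $\BC' = \BC$. The remainder of the argument is straightforward bookkeeping with Lemma~\ref{homtolerantpower} and the basic properties of critical obstructions.
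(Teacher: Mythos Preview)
Your argument is correct. The paper states the corollary without proof, as an immediate consequence of Lemma~\ref{homtolerantpower}; the missing step is precisely the equivalence ``$\BA$ has finite duality $\Leftrightarrow$ the number of hyperedges in critical obstructions of $\BA$ is bounded'', and you supply that cleanly.

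Your forward direction is exactly the elementary argument implicit in the ``Clearly'' of Lemma~\ref{fd=bd}, transported from diameter to hyperedge count; the care you take with the image substructure $\BC'$ (using $\psi(R_i(\BB))$ rather than the induced relations) is the right way to make criticality bite. For the reverse direction you take a slightly different route than the paper's ambient machinery suggests: rather than observing that a bound on hyperedges gives a bound on diameter and then invoking Lemma~\ref{fd=bd} (with its appeal to Lemma~\ref{largegirth} and Lemma~\ref{treesofagivendiameter}), you argue directly that bounded hyperedges plus no isolated elements gives bounded size, hence finitely many critical obstructions up to isomorphism. This is more elementary and avoids the Erd\H{o}s-type girth lemma entirely; the price is that your finite complete set consists of arbitrary critical obstructions rather than trees, but that is irrelevant for the statement at hand.
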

Note that the homomorphisms from $1$-tolerant powers of $\BA$ to
$\BA$ are operations on $\BA$. For $n \geq 3$,
an operation $\phi: \BA^n
\rightarrow \BA$ is called a {\em near unanimity operation} if it
satisfies the identities
$$\begin{array}{c}
  \phi(y,x,x,\ldots,x) = \phi(x,y,x,\ldots,x)
 = \cdots = \phi(x,x,x,\ldots,y) = x.
\end{array}
 $$
\begin{lemma} \label{nuf} Let $\BA$ be a core with finite duality.
Then every homomorphism from a 1-tolerant power of $\BA$ to $\BA$
is a near unanimity operation.
\end{lemma}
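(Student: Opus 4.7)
The plan is to exploit the rigidity of cores with tree duality, applied to a family of ``almost diagonal'' homomorphisms from $\BA$ into ${}^1\BA^n$.

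First I would record the observation that finite duality implies tree duality (by Theorem \ref{foequivalences}, a finite complete set of tree obstructions is in particular a complete set of tree obstructions), so by Lemma \ref{coreisrigid} the core $\BA$ is rigid: the only endomorphism of $\BA$ is the identity.

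Now fix any homomorphism $\phi: {}^1\BA^n \rightarrow \BA$, a coordinate $i \in \{1,\ldots,n\}$, and an arbitrary element $a_0 \in A$. Define $g_i^{a_0}: \BA \rightarrow {}^1\BA^n$ by
\[
  g_i^{a_0}(x) = (x,\ldots,x,a_0,x,\ldots,x),
\]
where $a_0$ occupies coordinate $i$ and every other coordinate is $x$. The key step is to check that $g_i^{a_0}$ is a homomorphism. If $(x_1,\ldots,x_r) \in R(\BA)$ for some $R \in \sigma$, then in the tuple $(g_i^{a_0}(x_1),\ldots,g_i^{a_0}(x_r))$ every coordinate projection other than the $i$-th equals $(x_1,\ldots,x_r) \in R(\BA)$. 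Thus at least $n-1$ of the $n$ projections land in $R(\BA)$, which by definition of ${}^1\BA^n$ is exactly what is needed to conclude $(g_i^{a_0}(x_1),\ldots,g_i^{a_0}(x_r)) \in R({}^1\BA^n)$.

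Composing, $\phi \circ g_i^{a_0}$ is an endomorphism of $\BA$, hence equal to the identity by rigidity. Unpacking, for every $x \in A$ we get
\[
  \phi(x,\ldots,x,a_0,x,\ldots,x) = x,
\]
with $a_0$ in position $i$. Since $i$ and $a_0$ were arbitrary, this is precisely the system of near-unanimity identities for $\phi$. I do not expect any step to be a real obstacle; the only point requiring attention is the verification that $g_i^{a_0}$ lands in ${}^1\BA^n$, but this is built into the ``one-tolerant'' relaxation, since we are allowed a single bad coordinate per hyperedge.
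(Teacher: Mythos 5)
Your proof is correct and follows essentially the same route as the paper: both define the "almost diagonal" homomorphism $\BA \rightarrow {}^1\BA^n$ (your $g_i^{a_0}$ is the paper's $\psi_{y,k}$), compose with $\phi$, and invoke rigidity of the core via Lemma \ref{coreisrigid}. The only difference is that you spell out the verification that $g_i^{a_0}$ lands in ${}^1\BA^n$, which the paper leaves implicit.
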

\begin{proof}
Let $\phi: {^1\BA^{n}} \rightarrow \BA$ be a homomorphism. For
every $y \in A$ and $k \in \{1, \ldots, n\}$, consider the
homomorphism $\psi_{y,k} : \BA \rightarrow {^1\BA^{n}}$ defined by
$\psi_{y,k}(x) = (x_1, \ldots, x_n)$ where $x_j = y$ if $j = k$
and $x_j = x$ otherwise. By Lemma~\ref{coreisrigid}, $\BA$ is
rigid whence the map $\phi \circ \psi_{y,k} : \BA \rightarrow \BA$
is the identity. Thus for every $x, y \in A$ and $k \in \{1,
\ldots, n\}$ we have $\phi(\psi_{y,k}(x)) = x$, and this is
precisely the definition of a near unanimity operation.
\end{proof}

%%%% this is added to the short version !!! %%%%%
\label{section_reference} We say that a structure $\BA$ {\em
admits} an operation $f:A^n\rightarrow A$, or equivalently that
$f$ {\em preserves} the basic relations of $\BA$ if $f$ is a
homomorphism from $\BA^n$ to $\BA$.

\begin{corollary} \label{core_nuf}
Every core relational structure with a first-order definable CSP
admits a near unanimity operation.
\end{corollary}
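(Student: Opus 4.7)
The plan is to chain together the three immediately preceding results, since each supplies exactly one link of the argument. Let $\BA$ be a core $\sigma$-structure whose CSP is first-order definable. First I would invoke Theorem~\ref{foequivalences} to translate the hypothesis into the combinatorial statement that $\BA$ has finite duality. Then Corollary~\ref{1antoa} gives a positive integer $n$ such that the one-tolerant power $^1\BA^n$ admits a homomorphism $\phi$ to $\BA$. This $\phi$ is in particular a homomorphism from $\BA^n$ to $\BA$ (since $\BA^n$ is a substructure of $^1\BA^n$), so by the definition stated just before the corollary, $\phi$ is an operation on $\BA$, i.e.\ preserves all basic relations.

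Next I would verify that $n$ may be taken to be at least $3$, which is required by the definition of a near unanimity operation. The paragraph before Lemma~\ref{homtolerantpower} records that deleting a coordinate yields a homomorphism $^1\BA^{n+1} \to {^1\BA^n}$. Composing $\phi$ with such a projection therefore produces a homomorphism $^1\BA^{n'} \to \BA$ for every $n' \geq n$, so I may freely replace $n$ by $\max(n,3)$.

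Having secured $\phi: {^1\BA^n} \to \BA$ with $n \geq 3$, Lemma~\ref{nuf} (whose hypotheses—that $\BA$ is a core with finite duality—are exactly what we have) immediately concludes that $\phi$ is a near unanimity operation, which is what we need.

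There is essentially no obstacle here: the corollary is a one-line consequence of the preceding development. The only tiny point requiring care is the arity constraint $n \geq 3$, handled by the projection remark above; everything else is pure invocation.
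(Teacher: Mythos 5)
Your proposal is correct and reconstructs exactly the chain of reasoning the paper leaves implicit: Theorem~\ref{foequivalences} to pass to finite duality, Corollary~\ref{1antoa} to obtain a homomorphism from some $^1\BA^n$ to $\BA$, and Lemma~\ref{nuf} to conclude it is a near unanimity operation. The additional observation that coordinate deletion lets one push $n$ up to at least $3$ is a sound (and genuinely needed, given the $n\geq 3$ clause in the definition) piece of tidying that the paper glosses over, but it does not change the route.
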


\subsection{Products of links and squares}
Recall from Section~\ref{sectionprelim} that the $n$-link $\BL_n$
of type $\sigma$ has universe $\{0, 1, \ldots, n\}$. For a
$\sigma$-structure $\BC$, a map $\phi$ from its universe to $\{0,
1, \ldots, n\}$ is a homomorphism from $\BC$ to $\BL_n$ if and
only if $\vert \phi(x) -\phi(y) \vert \leq 1$ whenever $x$ and $y$
are in a common hyperedge.

Given a $\sigma$-structure $\BA = \langle
A;R_1(\BA),\dots,R_m(\BA)\rangle$, note that the product $\BL_n
\times \BA^2$ has diameter at least $n$ since for any $a, a', b,
b' \in A$ the distance between $(0,a,b)$ and $(n,a',b')$ is at
least $n$. (The distance could even be infinite, that is,
$(0,a,b)$ and $(n,a',b')$
could lie in different connected components.)
Let $\sim_n$ be the equivalence relation defined on
$\BL_n \times \BA^2$ by
$$
(k,a,b) \sim_n (k',a',b') \equiv \left \{
\begin{array}{l}
\mbox{$(k,a,b) = (k',a',b')$} \\
\mbox{or $k = k' = 0$ and $a = a'$}\\
\mbox{or $k = k' = n$ and $b = b'$}.
\end{array}
\right.
$$
Note that $\BL_n \times \BA^2 /\!\!\sim_n$ also has diameter at
least $n$.
\begin{lemma} \label{cutends}
The substructures $\BB_0$ and $\BB_n$ of $\BL_n \times \BA^2
/\!\!\sim_n$ induced by $B_0 = \{ (k,a,b)/\!\!\sim_n : k \neq 0\}$
and $B_n = \{ (k,a,b)/\!\!\sim_n : k \neq n\}$ respectively both
admit homomorphisms to $\BA$.
\end{lemma}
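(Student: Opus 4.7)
The plan is to exhibit explicit homomorphisms from $\BB_0$ and $\BB_n$ to $\BA$ by showing that the projections onto one of the $\BA$-coordinates of $\BL_n \times \BA^2$ descend to the quotient on the appropriate piece. For $\BB_n$, I would use the projection onto the first $\BA$-coordinate; for $\BB_0$, the projection onto the second. I will write out the argument for $\BB_0$; the argument for $\BB_n$ is completely symmetric (the roles of the endpoints $0$ and $n$ are interchanged, and so are the two copies of $\BA$).

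First I would verify that the map $\phi_0 \colon B_0 \to A$ sending $(k,a,b)/\!\!\sim_n$ to $b$ is well-defined. The only classes in $\BL_n \times \BA^2 /\!\!\sim_n$ that contain more than one element are those with first coordinate $0$ (which collapse the second $\BA$-coordinate) and those with first coordinate $n$ (which collapse the first $\BA$-coordinate). Since $B_0$ excludes the classes with first coordinate $0$, every class occurring in $\BB_0$ either is a singleton $\{(k,a,b)\}$ (when $1 \le k \le n-1$) or has the form $\{(n,a',b) : a' \in A\}$ (when $k = n$). In either case the $b$-coordinate is an invariant of the class, so $\phi_0$ is well-defined.

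Next I would check that $\phi_0$ is a homomorphism. Suppose the tuple $((k_1,a_1,b_1)/\!\!\sim_n, \ldots, (k_r,a_r,b_r)/\!\!\sim_n)$ lies in $R_i(\BB_0)$. Since $\BB_0$ is an induced substructure of the quotient, this hyperedge comes from some hyperedge of $\BL_n \times \BA^2 /\!\!\sim_n$, hence from representatives $(k_j,a_j',b_j') \in R_i(\BL_n \times \BA^2)$. Because no $k_j$ equals $0$, every representative has the same $b$-coordinate as the one chosen above, so $b_j' = b_j$, and by the definition of the product the tuple $(b_1, \ldots, b_r)$ belongs to $R_i(\BA)$. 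Thus $\phi_0$ preserves all basic relations.

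I do not expect an essential obstacle: the whole content of the lemma is the observation that $\sim_n$ was designed exactly so that the two endpoint coordinates of $\BL_n \times \BA^2$ become consistent labels once one discards either the $k=0$ layer (to get $\BB_0$) or the $k=n$ layer (to get $\BB_n$). The only point that needs a moment's care is the well-definedness on the collapsed layer, which I handle above; after that the homomorphism property is immediate from the definition of the product structure.
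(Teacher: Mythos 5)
Your proposal is correct and follows essentially the same approach as the paper's proof: project onto the second $\BA$-coordinate for $\BB_0$ (and symmetrically onto the first for $\BB_n$), and check that the definition of $\sim_n$ makes this well-defined and a homomorphism. The only difference is that you spell out the well-definedness check explicitly, which the paper's proof glosses over slightly (it simply identifies $b_j$ with $b_j'$ without comment); your added care here is harmless and arguably an improvement.
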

\begin{proof}
On $B_0$ we can define a map $\phi$ to $A$ by
$\phi((k,a,b)/\!\!\sim_n) = b$. We show that $\phi$ is a
homomorphism from $\BB_0$ to $\BA$. For $R_i \in \sigma$ and
$((k_1,a_1,b_1)/\!\!\sim_n, \ldots,
(k_{r_i},a_{r_i},b_{r_i})/\!\!\sim_n) \in R_i(\BB_0)$, there exist
$(k_j',a_j',b_j') \in (k_j,a_j,b_j)/\!\!\sim_n$, $j = 1, \ldots,
%r_i$, such that $((k_1',a_1',b_1'), \ldots,
r_i$, with $((k_1',a_1',b_1'), \ldots,
(k_{r_i}',a_{r_i}',b_{r_i}')) \in R_i(\BL_n \times \BA^2)$. We
then have that $$(\phi((k_1,a_1,b_1)/\!\!\sim_n), \ldots,
\phi((k_{r_i},a_{r_i},b_{r_i})/\!\!\sim_n))$$ is equal to $(b_1',
\ldots, b_{r_i}')$ which  is in  $R_i(\BA)$, thus $\phi$ is a
homomorphism. Similarly, we can define a homomorphism $\psi: \BB_n
\rightarrow \BA$ by $\psi((k,a,b)/\!\!\sim_n) = a$.
\end{proof}
\begin{prop} \label{lxa2toa}
A $\sigma$-structure $\BA$ has critical obstructions of bounded
diameter if and only if there exists a positive integer $n$ such
that $\BL_n \times \BA^2 /\!\!\sim_n$ admits a homomorphism to
$\BA$.
\end{prop}
\begin{proof}
By the previous lemma, any critical obstruction of $\BA$ contained
in $\BL_n \times \BA^2 /\!\!\sim_n$ must contain an element with
first coordinate $0$ and an element with first coordinate $n$ (the
first coordinates are invariants of $\sim_n$-equivalence classes)
thus have diameter at least $n$. Hence if $n$ is larger than the
diameter of all the critical obstructions of $\BA$, then $\BL_n
\times \BA^2 /\!\!\sim_n$ admits a homomorphism to $\BA$.

Now suppose that $\BA$ has critical obstructions of arbitrarily
large diameter. We will show that for every integer $n$ there
exists an obstruction $\BC$ of $\BA$ which admits a homomorphism
to $\BL_n \times \BA^2 /\!\!\sim_n$. Let $\BC = \langle C;
R_1(\BC), \ldots, R_m(\BC) \rangle$ be an obstruction of $\BA$
with diameter at least $n+2$. Let $x$ and $y$ be elements of $C$
at distance $n+2$, and $\BC_x$, $\BC_y$ the substructures of $\BC$
induced respectively by $C\setminus \{x\}$ and $C\setminus \{y\}$.
Fix homomorphisms $\alpha: \BC_y \rightarrow \BA$ and $\beta:
\BC_x \rightarrow \BA$ and define $\kappa: C \rightarrow \{0,
\ldots, n\}$ by
$$
\kappa(z) = \left \{
\begin{array}{l} \mbox{$0$ if $z = x$,} \\
 \mbox{$d_{\BC}(x,z) - 1$ if $d_{\BC}(x,z) \leq n+1$ and $z \neq x$,} \\
  \mbox{$n$ if $d_{\BC}(x,z) \geq n+2$;}
\end{array}
\right.
$$
note that $\kappa$ is a homomorphism from $\BC$ to $\BL_n$. We fix
an element $p \in A$ and define a map $\phi$ from $C$ to the
universe of $\BL_n \times \BA^2 /\!\!\sim_n$ by
$$
\phi(z) = \left \{
\begin{array}{l}
\mbox{$(\kappa(z),\alpha(z),\beta(z))/\!\!\sim_n$ if $z \neq x, y$,} \\
\mbox{$(\kappa(z),\alpha(z),p)/\!\!\sim_n$ if $z = x$,}  \\
\mbox{$(\kappa(z),p,\beta(z))/\!\!\sim_n$ if $z = y$.}
\end{array}
\right.
$$
We will show that $\phi$ is a homomorphism from $\BC$ to $\BL_n
\times \BA^2 /\!\!\sim_n$.

Let $(z_1, \ldots, z_{r_i})$ be in $R_i(\BC)$ for some $R_i \in
\sigma$. If $z_j \not \in \{x, y\}$ for all $j \in \{1, \ldots,
r_i\}$, then $(\phi(z_1), \ldots, \phi(z_{r_i})) =
((\kappa(z_1),\alpha(z_1),\beta(z_1))/\!\!\sim_n, \ldots,
(\kappa(z_{r_i}),\alpha(z_{r_i}),\beta(z_{r_i}))/\!\!\sim_n$ which
belongs to $R_i(\BL_n \times \BA^2 /\!\!\sim_n)$ since $\kappa,
\alpha, \beta$ and the quotient map from $\BL_n \times \BA^2$ to
$\BL_n \times \BA^2 /\!\!\sim_n$ are homomorphisms. If there
exists an index $\hat{j}$ such that $z_{\hat{j}} = x$, then
$\kappa(z_j) = 0$ for $j = 1, \ldots, r_i$ whence $\phi(z_j) =
(0,\alpha(z_j),\alpha(z_j))/\!\!\sim_n$ for $j = 1, \ldots, r_i$
by definition of $\sim_n$; therefore $(\phi(z_1), \ldots,
\phi(z_{r_i}))$ is equal to
$((0,\alpha(z_1),\alpha(z_1))/\!\!\sim_n, \ldots,
(0,\alpha(z_{r_i}),\alpha(z_{r_i}))/\!\!\sim_n)$ which  is in $
R_i(\BL_n \times \BA^2 /\!\!\sim_n)$. Similarly if there exists an
index $\hat{j}$ such that $z_{\hat{j}} = y$, then $(\phi(z_1),
\ldots, \phi(z_{r_i})) = ((n,\beta(z_1),\beta(z_1))/\!\!\sim_n,
\ldots, (n,\beta(z_{r_i}),\beta(z_{r_i}))/\!\!\sim_n)$ $\in
R_i(\BL_n \times \BA^2 /\!\!\sim_n)$. Thus $\phi$ is a
homomorphism.

Since there exists a homomorphism from an obstruction of $\BA$ to
$\BL_n \times \BA^2 /\!\!\sim_n$ we  conclude that there is no
homomorphism from $\BL_n \times \BA^2 /\!\!\sim_n$ to $\BA$.
\end{proof}

 By Theorem~\ref{foequivalences},
Corollary~\ref{1antoa} and Proposition~\ref{lxa2toa} we have the
following characterisations:
\begin{theorem} \label{focharacterisations}
Let $\BA$ be a $\sigma$-structure. Then the following are
equivalent.
\begin{itemize}
\item[1.] $\BA$-CSP is first-order definable; \item[2.] For some
$n$ there exists a homomorphism from $^1\BA^n$ to $\BA$; \item[3.]
For some $n$ there exists a homomorphism from $\BL_n \times \BA^2
/\!\!\sim_n$ to $\BA$.
\end{itemize}
\end{theorem}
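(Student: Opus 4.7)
The plan is to observe that this theorem is essentially a repackaging of the three results cited just above its statement, so no new work is required beyond chaining the equivalences. I will prove (1)$\Leftrightarrow$(2) and (1)$\Leftrightarrow$(3) separately, since the intermediate conditions from Theorem \ref{foequivalences} naturally pair with Corollary \ref{1antoa} and Proposition \ref{lxa2toa} respectively.

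For (1)$\Leftrightarrow$(2), I would note that by Theorem \ref{foequivalences}, condition (1) is equivalent to $\BA$ having finite duality. Corollary \ref{1antoa} then gives exactly that finite duality is equivalent to the existence of a homomorphism from $^1\BA^n$ to $\BA$ for some positive integer $n$, which is condition (2). Chaining these two biconditionals yields (1)$\Leftrightarrow$(2).

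For (1)$\Leftrightarrow$(3), I would again invoke Theorem \ref{foequivalences}, this time using the equivalence of (1) with the statement that the critical obstructions of $\BA$ have bounded diameter (item 4 of that theorem). Proposition \ref{lxa2toa} then asserts precisely that critical obstructions of $\BA$ have bounded diameter if and only if there exists a positive integer $n$ such that $\BL_n \times \BA^2 /\!\!\sim_n$ admits a homomorphism to $\BA$, which is condition (3). Chaining gives (1)$\Leftrightarrow$(3).

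There is no genuine obstacle here, because the substantive content is already contained in Theorem \ref{foequivalences} (which packages Atserias' theorem and Lemma \ref{fd=bd}), in Corollary \ref{1antoa} (which was derived from Lemma \ref{homtolerantpower}), and in Proposition \ref{lxa2toa}. The only thing to be careful about is that the two characterisations go through different intermediate conditions of Theorem \ref{foequivalences}: condition (2) passes through \emph{finite duality}, while condition (3) passes through \emph{bounded diameter of critical obstructions}. The proof is therefore a one-line citation combining these three prior results.
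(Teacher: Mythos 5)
Your proposal is correct and matches the paper exactly: the paper states Theorem \ref{focharacterisations} as an immediate consequence of Theorem \ref{foequivalences}, Corollary \ref{1antoa} and Proposition \ref{lxa2toa}, with no further argument given. Your routing of condition (2) through finite duality and condition (3) through the bounded-diameter characterisation is precisely the intended chaining.
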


At first glance our situation vis-a-vis the decidability question
appears no better than before, but a closer look at the third
condition in the above theorem reveals an upper bound on $n$:
indeed, for $0 \leq k \leq n$, the restriction $\phi_k$ of a
homomorphism $\phi: \BL_n \times \BA^2 /\!\!\sim_n \rightarrow
\BA$ to $\{k\}\times \BA^2 /\!\!\sim_n$ corresponds to a
homomorphism from $\BA^2$ to $\BA$, and there are at most
$|A|^{|A|^2}$ of these. If for $k < k'$ we have $\phi_k =
\phi_{k'}$, then for $n' = n - k' + k$ we can define a
homomorphism $\phi': \BL_{n'} \times \BA^2 /\!\!\sim_{n'}
\rightarrow \BA$ by removing the useless middle part. Therefore to
determine whether $\BA$-CSP is first-order definable it suffices
to search for a homomorphism $\phi: \BL_n \times \BA^2 /\!\!\sim_n
\rightarrow \BA$ with $n \leq |A|^{|A|^2}$, and this is a finite
decision procedure.

We can refine this argument by defining a graph structure on the
set of all homomorphisms from $\BA^2$ to $\BA$, where two
homomorphisms $\psi$, $\psi'$ are called {\em adjacent} if there
exists a homomorphism $\phi: \BL_1 \times \BA^2 \rightarrow \BA$
such that $\phi_0 = \psi$ and $\phi_1 = \psi'$. A homomorphism
from $\BL_n \times \BA^2 /\!\!\sim_n$ to $\BA$ then corresponds to
a link of length $n$ between a homomorphism $\phi_0: \BA^2
\rightarrow \BA$ which factors through the first projection and a
homomorphism $\phi_n: \BA^2 \rightarrow \BA$ which factors through
the second projection. Since undirected reachability can be solved
in logarithmic space, in our exponential setting this means that
the search can be performed in polynomial space. In the next
section this idea is developed further and we prove that the
problem of determining whether $\BA$-CSP is first-order definable
is  actually in {\bf NP}.

\section{Dismantlability} \label{dismantlability}

\subsection{Preliminaries} \label{dism_prelim}
Let $\BA = \langle A;R_1(\BA),\dots,R_m(\BA)\rangle$ be a
$\sigma$-structure. For $x, y \in A$ we say that $y$ {\em
dominates $x$ in $\BA$}, if for every $R_i \in \sigma$, $j \in
\{1, \ldots, r_i\}$ and $(x_1, \ldots, x_{r_i}) \in R_i(\BA)$ with
$x_j = x$ we also have $(y_1, \ldots, y_{r_i}) \in R_i(\BA)$ with
$y_j = y$ and $y_k = x_k$ for all $k \neq j$. For instance, if
$R_i$ is ternary and $(x,t,x) \in R_i(\BA)$, then for $y$ to
dominate $x$ we must have $(y,t,x) \in R_i(\BA)$ and $(x,t,y) \in
R_i(\BA)$, each of which also implies $(y,t,y) \in R_i(\BA)$. We
say that $x$ {\em is dominated in $\BA$} if it is dominated by
some element $y \in A\setminus \{x\}$. We say that $\BA$ {\em
dismantles to} its induced substructure $\BB$ if there exists a
sequence $x_1,\dots,x_k$ of distinct elements of $A$ such that
$A\setminus B = \{x_1,\dots,x_k\}$ and for each $1 \leq i \leq k$
the element $x_i$ is dominated in the structure induced by $B \cup
\{x_i,\dots,x_k\}$. In other words, the structure $\BB$ can be
obtained from $\BA$ by successively removing dominated elements;
the sequence $x_1,\dots,x_k$ is then called a {\em dismantling
sequence}. Note that if $\BA_x$ is the substructure of $\BA$
induced by $A\setminus\{x\}$, where $x$ is dominated by $y$ in
$\BA$, then we can define a retraction $\rho: \BA \rightarrow
\BA_x$ by putting $\rho(x) = y$ and $\rho(z) = z$ for all $z \neq
x$. Using composition we then see that if $\BA$ dismantles to
$\BB$ then $\BB$ is a retract of $\BA$ (the converse does not hold
in general). Our first result shows that ``dismantling $\BA$ to
$\BB$'' can be done greedily.
\begin{lemma} \label{greedy}
Let $\BA, \BB$ be $\sigma$-structures such that $\BA$ dismantles
to $\BB$. Then for every dominated element $a \in A \setminus B$
of $\BA$, the substructure $\BA_a$ of $\BA$ induced by $A
\setminus \{a\}$ dismantles to $\BB$.
\end{lemma}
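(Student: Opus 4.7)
I will argue by induction on $k = |A \setminus B|$. If $k = 1$, then $a$ is forced to equal $x_1$ and $\BA_a = \BB$, so $\BA_a$ dismantles to $\BB$ via the empty sequence. For the inductive step, fix a dismantling sequence $x_1, \ldots, x_k$ from $\BA$ to $\BB$, and let $y \in A \setminus \{a\}$ be a dominator of $a$ in $\BA$. If $a = x_1$, the tail $x_2, \ldots, x_k$ already dismantles $\BA_a$ to $\BB$, so assume $a \neq x_1$. The plan is to first remove $x_1$ from $\BA_a$ and then invoke the inductive hypothesis on $\BA_{x_1}$, which dismantles to $\BB$ via the shorter sequence $x_2, \ldots, x_k$, with $a$ in the role of the dominated element. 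It thus suffices to verify (i) $x_1$ is dominated in $\BA_a$, and (ii) $a$ is dominated in $\BA_{x_1}$: then the inductive hypothesis yields that $(\BA_{x_1})_a = (\BA_a)_{x_1}$ dismantles to $\BB$, and prepending the one-step dismantling from $\BA_a$ to $(\BA_a)_{x_1}$ finishes the argument. Two elementary facts will drive the verifications: domination is preserved when one restricts to any substructure containing both elements, and domination is transitive within a fixed structure (a direct tuple chase).

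When $y \neq x_1$, both items are straightforward. For (ii), $y$ lies in $\BA_{x_1}$ and dominates $a$ there by restriction. For (i), let $z_1$ be a dominator of $x_1$ supplied by the original sequence; if $z_1 \neq a$ then $z_1$ still dominates $x_1$ in $\BA_a$, while if $z_1 = a$ then transitivity gives that $y$ dominates $x_1$ in $\BA$ (since $y$ dominates $a$ which dominates $x_1$), and $y$ lies in $\BA_a$ because $y \neq a$.

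The main obstacle is the case $y = x_1$, where one must locate a better dominator of $a$. Let $z_1$ be $x_1$'s dominator in $\BA$. If $z_1 \neq a$, transitivity shows $z_1$ dominates $a$ in $\BA$, and since $z_1 \neq x_1$, replacing $y$ by $z_1$ throws us back into the previous case. If $z_1 = a$, then $a$ and $x_1 = y$ are mutually dominating, and I claim the involution $\sigma: A \to A$ swapping $a \leftrightarrow x_1$ and fixing every other element is an automorphism of $\BA$: given any tuple $t$ in a basic relation $R$, one obtains $\sigma(t) \in R$ by processing one swap position at a time, invoking $x_1$-dominates-$a$ for each $a$-to-$x_1$ change and $a$-dominates-$x_1$ for each $x_1$-to-$a$ change. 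Since $a, x_1 \in A \setminus B$, $\sigma$ fixes $\BB$ setwise, so $\sigma(x_1), \ldots, \sigma(x_k)$ is a valid dismantling sequence for $\BA$ to $\BB$ starting with $\sigma(x_1) = a$, and its tail dismantles $\BA_a$ to $\BB$.
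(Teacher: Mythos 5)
Your proof is correct and takes a genuinely different route from the paper's. The paper keeps the given dismantling sequence $x_1,\ldots,x_k$ (with $a = x_j$) fixed and argues that the dominators $y_i$ can always be chosen distinct from $a$; it does this by following a chain of dominations among the $x_i$'s and, in the terminal case where $a$ and some $x_i$ dominate each other, observing that the two orders of removal yield isomorphic residues via an isomorphism fixing $B$. You instead induct on $|A\setminus B|$ and commute the removals of $x_1$ and $a$: you check that $x_1$ remains dominated after $a$ is removed and that $a$ remains dominated after $x_1$ is removed, so either one-step removal lands in $(\BA_a)_{x_1} = (\BA_{x_1})_a$, and the inductive hypothesis applied to $\BA_{x_1}$ closes the argument. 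Both proofs hit the same essential obstruction --- that $a$ and $x_1$ (resp.\ some $x_i$) may be each other's only dominators --- and both escape via an isomorphism swapping them; your explicit verification that mutual domination makes the transposition $(a\;x_1)$ an automorphism of $\BA$ (by processing one coordinate at a time) is a clean, self-contained way to package what the paper merely asserts. Your induction is arguably easier to audit than the paper's chain argument, at the cost of some extra case analysis on the dominators $y$ and $z_1$ needed to line up the two one-step removals.
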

\begin{proof}
Let $x_1, \ldots, x_k$ be a dismantling sequence of $\BA$ on
$\BB$. Note that for some index $j$ we have $x_j = a$. We will
show that by removing $x_j$ and perhaps rearranging the sequence
we get a dismantling sequence of $\BA_a$ on $\BB$. For $i = 1,
\ldots, k$ let $y_i$ be an element dominating $x_i$ in the
substructure $\BA_i$ of $\BA$ induced by $B \cup \{x_i, \ldots,
x_k\}$. Note that for some indices $i$ there may be many choices
for $y_i$, and whenever $y_i \neq a$, $y_i$ also dominates $x_i$
in the substructure of $\BA_a$ induced by $B \cup \{x_i, \ldots,
x_k\} \setminus \{x_j\}$. Thus it suffices to show that for all $i
\in \{1, \ldots, k\}$, we can select $y_i$ other than $a$.

Let $i$ be the smallest index such that $y_i = a$, and let $b$ be
an element dominating $a$ in $\BA$. Note that if $b \not \in
\{x_1, \ldots, x_{i-1}\}$, then $b$ also dominates $x_i$ in the
substructure of $\BA_a$ induced by $B \cup \{x_i, \ldots, x_k\}
\setminus \{x_j\}$, hence we can select $y_i = b$ instead. Thus we
can assume that $b = x_{i'}$ for some $i' < i$. We then define a
finite increasing sequence $i_0, i_1, \ldots, i_{\ell}$ by putting
$i_0 = i'$, and letting $i_{p+1}$ be the index in $\{i_p +1,
\ldots, i - 1\}$ such that $y_{i_p} = x_{i_{p+1}}$ if such an
index exists. Then $x_i$ is dominated by $a$ in $\BA_i$, which is
dominated by $b = x_{i_0}$ in $\BA$. For $p = 0, \ldots, \ell-1$,
$x_{i_p}$ is dominated by $y_{i_p} = x_{i_{p+1}}$ in $\BA_{i_p}$,
and $x_{i_{\ell}}$ is dominated by $y_{i_{\ell}} \neq a$ in
$\BA_{i_{\ell}}$. If $y_{i_{\ell}} \neq x_i$, then $y_{i_{\ell}}$
also dominates $x_i$ in $\BA_i$ hence we can select $y_i =
y_{i_{\ell}}$ instead of $y_i = a$. If $y_{i_{\ell}} = x_i$, then
$x_i$ and $a = x_j$ dominate each other in $\BA_i$. In this case,
$x_1, \ldots, x_{i-1}$ is a dismantling sequence of $\BA_a$ on its
substructure induced by $B \cup \{x_i, \ldots, x_{j-1} \} \cup
\{x_{j+1}, \ldots, x_k\}$, which is isomorphic to $\BA_{i+1}$ via
an isomorphism which fixes $B$, whence $\BA_a$ dismantles to
$\BB$.
\end{proof}

\subsection{Exponentiation}
Let $\BA$ and $\BB$ be two $\sigma$-structures. The {\em $\BA$-th
power of $\BB$} is the $\sigma$-structure
$$\BB^\BA = \langle B^A; R_1(\BB^\BA), \ldots, R_m(\BB^\BA) \rangle,$$
where $B^A$ is the set of all maps from $A$ to $B$, and for $i =
1, \ldots, m$ the relation $R_i(\BB^\BA)$ consists of all
hyperedges $(f_1,\dots,f_{r_i})$ such that
$(f_1(x_1),\dots,f_{r_i}(x_{r_i})) \in R_i(\BB)$ whenever
$(x_1,\dots,x_{r_i}) \in R_i(\BA)$. This definition is derived
from the following correspondence, whose proof is straightforward.
\begin{lemma} \label{expoprod}
Let $\phi: \BA \times \BC \rightarrow \BB$ be a homomorphism. Then
the map $\psi: C \rightarrow B^A$ defined by $\psi(c) = f_c$,
where $f_c(a) = \phi(a,c)$, is a homomorphism from $\BC$ to
$\BB^\BA$. Conversely, if $\psi: \BC \rightarrow \BB^\BA$ is a
homomorphism, then the map $\phi: A \times C \rightarrow B$
defined by $\phi(a,c) = \phi(c)(a)$ is a homomorphism from $\BA
\times \BC$ to $\BB$.
\end{lemma}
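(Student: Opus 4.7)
The plan is to prove both directions by directly unfolding the definitions of product structure, exponential structure, and homomorphism. There is essentially no obstacle beyond careful bookkeeping; the result is a formal adjunction between the product and exponential functors on the category of $\sigma$-structures, and each direction reduces to a single rewriting of quantifiers.

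For the forward direction, suppose $\phi: \BA \times \BC \rightarrow \BB$ is a homomorphism and define $f_c(a) = \phi(a,c)$ and $\psi(c) = f_c$. Fix any $R_i \in \sigma$ and any tuple $(c_1,\dots,c_{r_i}) \in R_i(\BC)$. To show $(f_{c_1},\dots,f_{c_{r_i}}) \in R_i(\BB^\BA)$ it suffices, by the definition of $R_i(\BB^\BA)$, to check that for every $(a_1,\dots,a_{r_i}) \in R_i(\BA)$ we have $(f_{c_1}(a_1),\dots,f_{c_{r_i}}(a_{r_i})) = (\phi(a_1,c_1),\dots,\phi(a_{r_i},c_{r_i})) \in R_i(\BB)$. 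But by the definition of the product, $((a_1,c_1),\dots,(a_{r_i},c_{r_i})) \in R_i(\BA \times \BC)$, so this follows from the fact that $\phi$ is a homomorphism.

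For the converse, suppose $\psi: \BC \rightarrow \BB^\BA$ is a homomorphism and define $\phi(a,c) = \psi(c)(a)$. Take any $R_i \in \sigma$ and any hyperedge $((a_1,c_1),\dots,(a_{r_i},c_{r_i})) \in R_i(\BA \times \BC)$. By definition of the product this gives $(a_1,\dots,a_{r_i}) \in R_i(\BA)$ and $(c_1,\dots,c_{r_i}) \in R_i(\BC)$. Since $\psi$ is a homomorphism, $(\psi(c_1),\dots,\psi(c_{r_i})) \in R_i(\BB^\BA)$, and specialising the defining condition of $R_i(\BB^\BA)$ to the tuple $(a_1,\dots,a_{r_i}) \in R_i(\BA)$ yields $(\psi(c_1)(a_1),\dots,\psi(c_{r_i})(a_{r_i})) = (\phi(a_1,c_1),\dots,\phi(a_{r_i},c_{r_i})) \in R_i(\BB)$, as required.

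The only subtle point I would flag is a minor typographical slip in the statement: in the converse, $\phi(a,c)$ should read $\psi(c)(a)$ rather than $\phi(c)(a)$; the proof above uses the corrected formula. Once this is noted, the two constructions are visibly mutually inverse (each is obtained from the other by swapping which slot of $\phi$ is frozen), so the lemma records a genuine bijection between $\mathrm{Hom}(\BA \times \BC, \BB)$ and $\mathrm{Hom}(\BC, \BB^\BA)$.
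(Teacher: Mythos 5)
Your proof is correct and is exactly the straightforward definitional unfolding the paper has in mind when it says the lemma's proof is ``straightforward'' and omits it. Your observation that the converse clause contains a typo --- $\phi(a,c) = \phi(c)(a)$ should read $\phi(a,c) = \psi(c)(a)$ --- is also right, and you have correctly used the intended formula.
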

In particular the homomorphisms from $\BA$ to itself can be viewed
as homomorphisms from the product of $\BA$ and a loop to $\BA$,
which then correspond to loops in $\BA^\BA$.

Now suppose that $a$ is dominated by $b$ in $\BA$, and let $\rho$
be the retraction which maps $a$ to $b$ and fixes every other
element of $A$. Then, considered as an element of $\BA^\BA$,
$\rho$ is a ``neighbour'' of the identity in the sense that there
exists a homomorphism $\psi$ from the $1$-link $\BL_1$ to
$\BA^\BA$ defined by $\psi(0) = \mbox{id}_A$ and $\psi(1) = \rho$.
The main result of this section is a generalisation of this
observation to the dismantling process in general.
\begin{lemma} \label{dismantle} Let $\BA$ be a $\sigma$-structure and
let $\BB$ be a substructure of $\BA$. Then $\BA$  dismantles to
$\BB$ if and only if there exist some $n \geq 0$ and a
homomorphism $P:\BL_n \rightarrow \BA^\BA$ such that
\begin{itemize}
\item[(i)] $P(0) = \mbox{id}_A$, \item[(ii)] $B$ is fixed
pointwise by $P(t)$ for every $t=0,\dots,n$, \item[(iii)] $P(n)$
is a retraction onto $B$.
\end{itemize}
\end{lemma}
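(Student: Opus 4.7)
The plan is to establish both directions by induction, exploiting the correspondence between a single domination in $\BA$ and a one-step walk in $\BA^{\BA}$ starting at $\mbox{id}_A$; the rest is bookkeeping via pre- and post-composition with retractions.

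For the direction $(\Rightarrow)$, I induct on the length $k$ of a dismantling sequence $a_1,\dots,a_k$. The case $k=0$ is trivial with $n=0$ and $P(0)=\mbox{id}_A$. Otherwise, $\BA'$, the substructure of $\BA$ on $A\setminus\{a_1\}$, dismantles to $\BB$ in $k-1$ steps, so by the induction hypothesis there is a homomorphism $P':\BL_{n'}\to\BA'^{\BA'}$ satisfying (i)--(iii) relative to $(\BA',\BB)$. Let $\rho:\BA\to\BA'$ be the retraction sending $a_1$ to its dominator $b_1$ and fixing other elements, and define $P:\BL_{n'+1}\to\BA^{\BA}$ by $P(0)=\mbox{id}_A$ and $P(t)=P'(t-1)\circ\rho$ for $t\geq 1$. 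Conditions (i), (ii), (iii) are immediate. The transitions $P(t-1)\to P(t)$ for $t\geq 2$ are edges of $\BA^{\BA}$ because both $\rho$ and the edge $(P'(t-2),P'(t-1))$ of $\BA'^{\BA'}$ are homomorphisms. The new transition $P(0)\to P(1)=\rho$ reduces to the claim that for any hyperedge $(x_1,\dots,x_r)\in R_i(\BA)$, replacing an arbitrary subset of the occurrences of $a_1$ by $b_1$ yields a tuple still in $R_i(\BA)$; this is the iterated form of ``$b_1$ dominates $a_1$'' applied one coordinate at a time.

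For the direction $(\Leftarrow)$, I induct on $|A\setminus B|$. The case $A=B$ is trivial. Otherwise, since $P(n)$ retracts onto a proper substructure, some $P(t)$ is not $\mbox{id}_A$; let $t_0>0$ be minimal with $f:=P(t_0)\neq\mbox{id}_A$. For any $a$ with $f(a)\neq a$, I claim $f(a)$ dominates $a$ in $\BA$: given $R_i\in\sigma$, $j$, and $(x_1,\dots,x_{r_i})\in R_i(\BA)$ with $x_j=a$, set $s_j=t_0$ and $s_k=t_0-1$ for $k\neq j$; then $(s_1,\dots,s_{r_i})\in\{t_0-1,t_0\}^{r_i}\subseteq R_i(\BL_n)$ and applying $P$ coordinate-wise produces $(x_1,\dots,f(a),\dots,x_{r_i})\in R_i(\BA)$, as required. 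Since $f$ fixes $B$, such an $a$ lies in $A\setminus B$. Letting $\rho:\BA\to\BA_a$ be the retraction sending $a\mapsto f(a)$, define $P':\BL_n\to\BA_a^{\BA_a}$ by $P'(t)=\mbox{id}_{A_a}$ for $t<t_0$ and $P'(t)=\rho\circ P(t)|_{A_a}$ for $t\geq t_0$. The edge checks on $P'$ follow from the corresponding checks on $P$ together with $\rho$ being a homomorphism, the bridge at $t_0$ being the dual of the mixed-coordinate computation above. Since $P'(n)=P(n)|_{A_a}$ is a retraction of $\BA_a$ to $\BB$, the induction hypothesis applies to $\BA_a$; prepending the removal of the dominated element $a$ gives the desired dismantling of $\BA$ to $\BB$.

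I expect the main obstacle to be the mixed-coordinate verifications: the hyperedges of $\BA^{\BA}$ arising from the relations $\{t-1,t\}^{r_i}$ of $\BL_n$ encode, on one side, iterated dominations needed to show that $(\mbox{id}_A,\rho)$ is an edge, and on the other side, single dominations inferred from arbitrary mixings of $P(t-1)$ and $P(t)$. Once this correspondence is set up precisely, the two inductions become essentially formal, and Lemma~\ref{greedy} is not even needed since the greedy removal is baked into how we construct (or extract) the path $P$.
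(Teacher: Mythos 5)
Your proposal is correct, and the converse direction takes a genuinely different route from the paper's. For the forward direction, your induction on the length of the dismantling sequence essentially unfolds into the same construction the paper gives directly: the paper defines $\rho_0=\mbox{id}_A$ and $\rho_t$ by redirecting $x_t$ to $y_t$, then verifies the hyperedge condition coordinate-by-coordinate, which is exactly your ``iterated domination'' observation for the bridge transition. For the converse, however, the paper proceeds globally: it replaces $P$ by $P'=\varepsilon_p\circ P$ (idempotence), then by $P''$ via progressive left composition (nested images), then by $P'''=\varepsilon_p\circ P''$ (idempotence again, now with nested images), and finally reads off dominations from one step $\hat\rho_{t-1}\to\hat\rho_t$ at a time --- possibly removing several elements per step, each dominated in the current image. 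You instead induct on $|A\setminus B|$: locate the first $t_0$ with $P(t_0)\neq\mbox{id}_A$, use the mixed-coordinate hyperedge at $t_0$ to certify that $P(t_0)(a)$ dominates $a$, delete $a$, post-compose $P$ with the retraction to land in $\BA_a^{\BA_a}$, and recurse. Your approach removes exactly one element per round and re-derives a fresh homomorphism each time, which is conceptually cleaner and sidesteps the $\varepsilon_p$ machinery; the paper's approach makes all the algebraic modifications up front and extracts the whole dismantling sequence from a single nested family of retractions. Both are valid and roughly the same total length; your version does require the slightly delicate well-definedness and edge checks for the restricted $P'$ at $t_0$ (the ``bridge'' case), but you handled that correctly by noting that $\rho\circ P(s_k)$ agrees with $P'(s_k)$ in every case. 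Your final remark that Lemma~\ref{greedy} is not needed here is accurate --- neither proof uses it; it is used elsewhere to justify greedy dismantling in the algorithm.
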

We call two homomorphisms $f, g: \BA \rightarrow \BA$ {\em
adjacent} if there is a homomorphism $P$ from $\BL_1$ to $\BA^\BA$
such that $P(0) = f$ and $P(1) = g$. Hence Lemma~\ref{dismantle}
states that $\BA$ dismantles to $\BB$ if and only if there is a
link of homomorphisms fixing $B$ pointwise which joins the
identity on $A$ to a retraction onto $B$. The proof will use the
following property of composition in powers, whose proof is a
straightforward application of the definition.
\begin{lemma} \label{composition}
Let $\BA, \BB, \BC$ be $\sigma$-structures. Then the map $\phi:
\BA^\BB \times \BB^\BC \rightarrow \BA^\BC$ defined by $\phi(f,g)
= f \circ g$ is a homomorphism. In particular for any integer $p$,
the map $\varepsilon_p: \BA^\BA \rightarrow \BA^\BA$ defined by
$\varepsilon_p(f) = f^p$ is a homomorphism.
\end{lemma}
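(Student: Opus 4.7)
The plan is to verify directly that $\phi$ preserves each relation $R_i$, and then to derive the statement about $\varepsilon_p$ as a formal consequence by composing $\phi$ with diagonals.

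For the first part, I would unwind the definitions. A typical element of $\BA^\BB \times \BB^\BC$ is a pair $(f,g)$ with $f\colon B\to A$ and $g\colon C\to B$, and $\phi(f,g)=f\circ g\colon C\to A$ is indeed an element of the universe of $\BA^\BC$. Fix $R_i\in\sigma$ of arity $r=r_i$ and suppose $((f_1,g_1),\dots,(f_r,g_r))\in R_i(\BA^\BB\times \BB^\BC)$, so that by definition of the product we simultaneously have $(f_1,\dots,f_r)\in R_i(\BA^\BB)$ and $(g_1,\dots,g_r)\in R_i(\BB^\BC)$. To show $(f_1\circ g_1,\dots,f_r\circ g_r)\in R_i(\BA^\BC)$, I would take an arbitrary $(c_1,\dots,c_r)\in R_i(\BC)$ and apply the definition of $\BB^\BC$ to obtain $(g_1(c_1),\dots,g_r(c_r))\in R_i(\BB)$, and then apply the definition of $\BA^\BB$ with the tuple $(g_1(c_1),\dots,g_r(c_r))$ in place of $(x_1,\dots,x_r)$ to get $(f_1(g_1(c_1)),\dots,f_r(g_r(c_r)))\in R_i(\BA)$. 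That is precisely the condition required for $(f_1\circ g_1,\dots,f_r\circ g_r)$ to lie in $R_i(\BA^\BC)$. There is no obstacle here: the verification is a routine two-step chase through the defining quantifier in the construction $\mathbb{X}^\BY$.

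For the second part about $\varepsilon_p$, I would proceed by induction on $p\geq 1$. The case $p=1$ is the identity, which is trivially a homomorphism. For the inductive step, note that $\varepsilon_p$ factors as
\[
\BA^\BA\xrightarrow{\ \Delta\ } \BA^\BA\times \BA^\BA\xrightarrow{\ \mathrm{id}\times \varepsilon_{p-1}\ } \BA^\BA\times \BA^\BA \xrightarrow{\ \phi\ } \BA^\BA,
\]
where $\Delta(f)=(f,f)$ is the diagonal, which is a homomorphism by the universal property of products, and $\mathrm{id}\times \varepsilon_{p-1}$ is a homomorphism by the induction hypothesis together with the fact that a product of homomorphisms is a homomorphism. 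Applying the first part of the lemma with $\BB=\BC=\BA$ tells us that $\phi$ is a homomorphism, and a composition of homomorphisms is a homomorphism, so $\varepsilon_p$ is a homomorphism. The only subtle point is recognising that the lemma's first part is exactly the tool needed to make this factorisation work, so once that is in hand the inductive argument is immediate. I expect no genuine difficulty; the main care is simply bookkeeping the quantifier order in the definition of $R_i(\BA^\BB)$ when chaining the two membership conditions together.
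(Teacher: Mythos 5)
The paper does not give an explicit proof of this lemma; it is stated with the remark that its proof is ``a straightforward application of the definition.'' Your argument is a correct and complete filling-in of that gap: the two-step chase through the defining quantifier in the exponential construction is exactly the intended verification, and your derivation of the $\varepsilon_p$ statement by factoring through the diagonal $\Delta$ and invoking the first part with $\BB=\BC=\BA$ is a clean way to make the ``in particular'' rigorous (one could equally verify $\varepsilon_p$ directly from the definition of $R_i(\BA^\BA)$, but your factorization is tidier and exhibits the dependence on the composition map explicitly). The only slight imprecision is that the lemma says ``any integer $p$'' while your induction starts at $p=1$; the case $p=0$ follows because $\mathrm{id}_A$ is a loop in $\BA^\BA$, so the constant map to it is a homomorphism, and negative $p$ is not meaningful here. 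This is cosmetic and not a genuine gap.
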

For every $f \in A^A$, and $a \in A$, there exist integers $0 \leq
i < j \leq \vert A \vert$ such that we have $f^j(a) = f^i(a)$; we
say that $a$ has {\em finite period under $f$} if we can take $i =
0$. For $p = \vert A \vert !$, we then have $f^p(a) = a$ if $a$
has finite period under $f$, and otherwise $f^p(a)$ has finite
period under $f$. Thus $f^p$ is a set-theoretic retraction of $A$
onto the set of its elements of finite period under $f$. Therefore
for $p = \vert A \vert !$, the homomorphism $\varepsilon_p$
defined in Lemma~\ref{composition} is a retraction of $\BA^\BA$
onto its substructure induced by the set-theoretic retractions of
$A$.

\begin{proof} [Proof of Lemma \ref{dismantle}]
Suppose that $\BA$ dismantles to $\BB$, and let $x_1, \ldots, x_k$
be a dismantling sequence of $\BA$ on $\BB$. For $t = 1, \ldots,
k$, let $y_t \neq x_t$ be an element dominating $x_t$
in the substructure
of $\BA$ induced by $B \cup \{x_t, \ldots, x_k\}$. We define a
sequence $\rho_0, \rho_1, \ldots, \rho_k$ of retractions
inductively by $\rho_0 = \mbox{id}_A$, $\rho_t(z) = y_t$ if
$\rho_{t-1}(z) = x_t$ and $\rho_t(z) = \rho_{t-1}(z)$ otherwise.
Let $P: \BL_k \rightarrow \BA^\BA$ be defined by $P(t) = \rho_t$.
Then $P(0)$ is the identity, $B$ is fixed by each $P(t)$, and
$P(k)$ is a retraction onto $B$. We show that $P$ is a
homomorphism.

For $R_i \in \sigma$, let $(t_1, \ldots, t_{r_i})$ be an element
of $R_i(\BL_k)$. Then there exists an index $t \in \{1, \ldots,
k\}$ and a subset $J$ of $\{1, \ldots, r_i\}$ such that $t_j = t$
if $j \in J$ and $t_j = t-1$ otherwise. We then have $(P(t_1),
\ldots, P(t_{r_i})) = (f_1, \ldots, f_{r_i})$ where $f_j = \rho_t$
if $j \in J$ and $f_j = \rho_{t-1}$ otherwise. For every $(a_1,
\ldots, a_{r_i}) \in R_i(\BA)$, we have $(\rho_{t-1}(a_1), \ldots,
\rho_{t-1}(a_{r_i}))\in R_i(\BA)$, since $\rho_{t-1}$ is a
homomorphism. Now $(f_1(a_1), \ldots, f_{r_i}(a_{r_i}))$ coincides
with $(\rho_{t-1}(a_1), \ldots, \rho_{t-1}(a_{r_i}))$ except for
some possible coordinates in $J$ where $y_t$ replaces $x_t$. Since
$\{\rho_{t-1}(a_1), \ldots, \rho_{t-1}(a_{r_i})\} \subseteq B \cup
\{ x_t \ldots, x_k\}$ and $y_t$ dominates $x_t$ in the
substructure of $\BA$ induced by that subset, we then have
$(f_1(a_1), \ldots, f_{r_i}(a_{r_i})) \in R_i(\BA)$. Thus $(f_1,
\ldots, f_{r_i}) \in R_i(\BA^\BA)$. This shows that $P$ is a
homomorphism.

Conversely, suppose that $P: \BL_n \rightarrow \BA^\BA$ is a
homomorphism such that for $\phi_t = P(t), t = 0, \ldots, n$ we
have $\phi_0 = \mbox{id}_A$, $B$ is fixed pointwise by each
$\phi_t$ and $\phi_n$ is a retraction onto $B$. Put $p = \vert A
\vert !$. We define three maps as follows.
\begin{itemize}
\item[(i)] $P': \BL_n \rightarrow \BA^\BA$ is defined by $P'(t) =
\rho_t := \phi_t^p$. Thus $P' = \varepsilon_p \circ P$, which is a
homomorphism by Lemma \ref{composition}.
\item[(ii)] $P'': \BL_n
\rightarrow \BA^\BA$, where $P''(t) = \psi_t$ is defined
recursively by $\psi_0 = \rho_0$ and $\psi_t = \psi_{t-1} \circ
\rho_t$ for $t = 1, \ldots, n$. Since $\rho_t$ is idempotent,
$\psi_t = \psi_{t-1} \circ \rho_t = \psi_{t-1} \circ \rho_t \circ \rho_t
= \psi_t \circ \rho_t$ is adjacent to $\psi_{t}
\circ \rho_{t+1} = \psi_{t+1}$ by  Lemma \ref{composition},
whence $P''$ is a homomorphism.
\item[(iii)] $P''' = \varepsilon_p \circ P'': \BL_n \rightarrow
\BA^\BA$ is a homomorphism by Lemma \ref{composition}.
\end{itemize}
Note that $P'''(0) = P''(0) = P'(0) = P(0) = \mbox{id}_A$, and
since every $P(t)$ fixes $B$, $P'''(n) = P''(n) = P'(n) = P(n)$
which is a retraction onto $B$. Also, for $t = 1, \ldots, n$,
$\hat{\rho}_t := P'''(t)$ is a retraction whose image
$\mbox{im}(\hat{\rho}_t)$ is contained in that of
$\hat{\rho}_{t-1}$. We can then show that every $a \in
\mbox{im}(\hat{\rho}_{t-1}) \setminus \mbox{im}(\hat{\rho}_t)$ is
dominated by $\hat{\rho}_t(a)$ in the substructure $\BA_{t-1}$ of
$\BA$ induced by $\mbox{im}(\hat{\rho}_{t-1})$. Indeed, for $R_i
\in \sigma$ and $(a_1, \ldots, a_{r_i}) \in R_i(\BA_{t-1})$ such
that $a_j = a$ for some index $j$, we have that
$(\hat{\rho}_{t-1}(a_1),\ldots, \hat{\rho}_{t-1}(a_{j-1}),
\hat{\rho}_{t}(a_{j}),
\hat{\rho}_{t-1}(a_{j+1}),\ldots,\hat{\rho}_{t-1}(a_{r_i}))$ is in
$R_i(\BA)$ since $\hat{\rho}_{t}$ is adjacent to
$\hat{\rho}_{t-1}$, whence $\hat{\rho}_{t}(a)$ dominates $a$ in
$\BA_{t-1}$. Therefore $\BA$ dismantles to its substructure
induced by $\hat{\rho}_n(A) = B$.
\end{proof}

\subsection{$\BA^{\left (\BA^2\right )}$ and
$\left (\BA^2 \right )^{\left (\BA^2\right )}$}

Here we interpret Lemma \ref{lxa2toa} in terms of
exponential structures. For a $\sigma$-structure $\BA$ we denote
$\pi_1$ and $\pi_2$ the two projections of $\BA^2$ on $\BA$. The
{\em diagonal} of $\BA^2$ is its substructure $\Delta_{\BA^2}$
induced by $\{ (a,a) : a \in A\}$.
\begin{lemma} \label{aalaa2}
Let $\BA$ be a $\sigma$-structure and $n$ an integer. If there
exists a homomorphism $P: \BL_n \rightarrow \BA^{\left
(\BA^2\right )}$ such that $P(0) = \pi_1$ and $P(n) = \pi_2$, then
there exists a homomorphism from  $\BL_n \times \BA^2 /\!\!\sim_n$
to $\BA$. If $\BA$ is a core, the converse also holds.
\end{lemma}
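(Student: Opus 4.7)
The plan is to apply the exponential-product correspondence of Lemma~\ref{expoprod}, which identifies homomorphisms $P: \BL_n \to \BA^{\left(\BA^2\right)}$ with homomorphisms $\Phi: \BL_n \times \BA^2 \to \BA$ via $\Phi(k,a,b) = P(k)(a,b)$. Under this correspondence, the boundary conditions $P(0)=\pi_1$ and $P(n)=\pi_2$ become $\Phi(0,a,b)=a$ and $\Phi(n,a,b)=b$, which are exactly the identifications made by $\sim_n$.

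For the forward direction, starting from such a $\Phi$, the expressions $\Phi(0,a,b)=a$ and $\Phi(n,a,b)=b$ depend only on the coordinates that $\sim_n$ retains. Hence $\Phi$ is constant on $\sim_n$-equivalence classes, so it descends to a well-defined map $\phi : \BL_n \times \BA^2 /\!\!\sim_n \to \BA$; since the quotient map is a surjective homomorphism and $\Phi$ is a homomorphism, $\phi$ is a homomorphism.

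For the converse, assume $\BA$ is a core and let $\phi: \BL_n \times \BA^2 /\!\!\sim_n \to \BA$ be a homomorphism. Composing with the quotient map and applying Lemma~\ref{expoprod} yields a homomorphism $P: \BL_n \to \BA^{\left(\BA^2\right)}$. Because $(0,a,b)\sim_n(0,a,b')$ for every $b'$, the value $P(0)(a,b)$ depends only on $a$; write $f(a) := P(0)(a,b)$. Evaluating at hyperedges of the form $((0,a_1,a_1), \ldots, (0,a_r,a_r))$ of $\BL_n \times \BA^2$, which live in every $R_i$ whenever $(a_1, \ldots, a_r)\in R_i(\BA)$, shows that $f$ preserves every basic relation of $\BA$, so $f$ is an endomorphism of $\BA$ and, since $\BA$ is a core, an automorphism. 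The analogous reasoning at level $n$ gives $P(n)(a,b) = g(b)$ for an automorphism $g$ of $\BA$.

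The main obstacle is that $f$ and $g$ need not be the identity, so $P$ itself does not yet satisfy the required boundary conditions. We repair this by pre- and post-composing with automorphisms. Setting $h := f^{-1} \circ g$, define $\hat P : \BL_n \to \BA^{\left(\BA^2\right)}$ by $\hat P(k)(a,b) := f^{-1}\bigl(P(k)(a, h^{-1}(b))\bigr)$. This $\hat P$ is $P$ followed by two endomorphisms of $\BA^{\left(\BA^2\right)}$, namely pre-composition with the automorphism $(\mbox{id}_A, h^{-1})$ of $\BA^2$ and post-composition with the automorphism $f^{-1}$ of $\BA$; both are homomorphisms by Lemma~\ref{composition} (fixing an argument at an automorphism, which is a loop in the appropriate function structure). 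A direct check gives $\hat P(0)(a,b) = f^{-1}(f(a)) = a$ and $\hat P(n)(a,b) = f^{-1}(g(h^{-1}(b))) = f^{-1}(f(b)) = b$, so $\hat P(0)=\pi_1$ and $\hat P(n)=\pi_2$, completing the converse.
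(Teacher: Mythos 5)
Your proof is correct. The forward direction is essentially identical to the paper's: apply the exponential--product correspondence and observe that the boundary conditions force $\Phi$ to be constant on $\sim_n$-classes, so that it descends through the quotient.

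For the converse you take a genuinely different route. The paper also obtains $P(0) = \phi_1 \circ \pi_1$ and $P(n) = \phi_2 \circ \pi_2$ for endomorphisms $\phi_1, \phi_2$ of $\BA$, but then concludes $\phi_1 = \phi_2 = \mbox{id}_A$ directly by invoking Lemma~\ref{coreisrigid}. That lemma requires $\BA$ to be a core \emph{with tree duality}, so the paper's argument tacitly relies on the fact that the existence of a homomorphism from $\BL_n \times \BA^2/\!\!\sim_n$ to $\BA$ already forces tree duality (via Proposition~\ref{lxa2toa} and Lemma~\ref{fd=bd}). You avoid rigidity entirely: you use only the elementary fact that endomorphisms of a finite core are automorphisms, identify $f$ and $g$ with $P(0)$ on the diagonal and $P(n)$ on the diagonal respectively, and then repair the link by pre- and post-composing with automorphisms, checking via Lemma~\ref{composition} that the resulting $\hat P$ is still a homomorphism of $\BL_n$ into $\BA^{\left(\BA^2\right)}$. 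This makes your converse self-contained where the paper's carries an implicit dependence on the tree-duality detour, at the cost of a slightly longer computation. A small simplification of your correction: taking $\hat P(k)(a,b) := P(k)\bigl(f^{-1}(a), g^{-1}(b)\bigr)$, i.e.\ pre-composing $P(k)$ with the single automorphism $(f^{-1},g^{-1})$ of $\BA^2$, gives $\hat P(0)=\pi_1$ and $\hat P(n)=\pi_2$ directly and avoids introducing $h$.
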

\begin{lemma} \label{a2alaa2}
Let $\BA$ be a $\sigma$-structure. If $\BA^2$ dismantles to its
diagonal, then for some $n$ there exists a homomorphism $P: \BL_n
\rightarrow \BA^{\left (\BA^2\right )}$ such that $P(0) = \pi_1$
and $P(n) = \pi_2$. If $\BA$ is a core, the converse also holds.
\end{lemma}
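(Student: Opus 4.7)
The plan is to apply Lemma~\ref{dismantle} to the pair $(\BA^2, \Delta_{\BA^2})$ to translate the dismantling condition into the existence of a link in $(\BA^2)^{(\BA^2)}$ from $\mathrm{id}_{\BA^2}$ to a retraction onto the diagonal, and then exploit the natural identification $(\BA^2)^{(\BA^2)} \cong \BA^{(\BA^2)} \times \BA^{(\BA^2)}$ (from the universal property of the product) to shuttle between $\BA^2$-valued and $\BA$-valued homomorphism links.

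For the forward direction, suppose $\BA^2$ dismantles to $\Delta_{\BA^2}$. Lemma~\ref{dismantle} yields a homomorphism $Q: \BL_m \to (\BA^2)^{(\BA^2)}$ with $Q(0) = \mathrm{id}_{\BA^2}$, $Q(m)$ a retraction onto $\Delta_{\BA^2}$, and every $Q(t)$ fixing $\Delta_{\BA^2}$ pointwise. Decomposing $Q(t) = (Q_1(t), Q_2(t))$ yields homomorphisms $Q_1, Q_2 : \BL_m \to \BA^{(\BA^2)}$ (obtained as compositions with $\pi_1, \pi_2$, which are homomorphisms by Lemma~\ref{composition}) satisfying $Q_1(0) = \pi_1$, $Q_2(0) = \pi_2$, and $Q_1(m) = Q_2(m)$ (since the image of $Q(m)$ lies in $\Delta_{\BA^2}$). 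I would then obtain $P : \BL_{2m} \to \BA^{(\BA^2)}$ by concatenating $Q_1$ forward on $\{0, \ldots, m\}$ with $Q_2$ in reverse on $\{m, \ldots, 2m\}$; the two pieces agree at $t = m$ and each is a homomorphism (the second uses the reversal automorphism of $\BL_m$), so $P$ is the required link from $\pi_1$ to $\pi_2$.

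For the converse, assume $\BA$ is a core and that such a $P$ exists. The key extra ingredient is rigidity, which I extract as follows: by Lemma~\ref{aalaa2} the hypothesis yields a homomorphism $\BL_n \times \BA^2/\!\!\sim_n \to \BA$; Proposition~\ref{lxa2toa} and Theorem~\ref{foequivalences} then give $\BA$ finite, hence tree, duality; Lemma~\ref{coreisrigid} finally makes $\BA$ rigid. Since $x \mapsto P(t)(x,x)$ is a homomorphism $\BA \to \BA$, rigidity forces $P(t)(a,a) = a$ for all $t$ and $a$. Now define $Q : \BL_n \to (\BA^2)^{(\BA^2)}$ by $Q(t)(a,b) = (a, P(n-t)(a,b))$, i.e.\ $Q(t) = (\pi_1, P(n-t))$ under the product decomposition. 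This is a homomorphism (first component the constant loop $\pi_1$, second component $P$ precomposed with the reversal of $\BL_n$); one verifies $Q(0) = \mathrm{id}_{\BA^2}$, $Q(n)(a,b) = (a,a)$ is a retraction onto $\Delta_{\BA^2}$, and $Q(t)(a,a) = (a, P(n-t)(a,a)) = (a,a)$ by rigidity. Lemma~\ref{dismantle} then yields that $\BA^2$ dismantles to $\Delta_{\BA^2}$.

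The only delicate point is the invocation of rigidity in the converse: the hypothesis on $P$ does not obviously imply that each $P(t)$ preserves the diagonal, and this is forced only via the chain Lemma~\ref{aalaa2} $\to$ Proposition~\ref{lxa2toa} $\to$ Theorem~\ref{foequivalences} $\to$ Lemma~\ref{coreisrigid}. Everything else is routine bookkeeping via the product--exponential adjunction and the reversal symmetry of $\BL_m$.
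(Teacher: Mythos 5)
Your proof is correct and follows essentially the same strategy as the paper's: both directions are translated to statements about links in $(\BA^2)^{(\BA^2)}$ via Lemma~\ref{dismantle}, with the forward direction obtained by concatenating the two projected copies of the link (each copy a homomorphism by Lemma~\ref{composition} and the reversal symmetry of $\BL_m$), and the converse obtained by pairing $P$ with a fixed projection. The paper's converse uses $P'(t) = (P(t), \pi_2)$ rather than your $Q(t) = (\pi_1, P(n-t))$, but these are mirror images of each other and both work.

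The one place you go beyond the paper's text is worth noting: the paper invokes Lemma~\ref{coreisrigid} directly to conclude that each $P(t)$ fixes the diagonal pointwise, without explaining why the hypothesis of that lemma (tree duality) holds. You correctly observe that this must be supplied, and you extract it via the chain Lemma~\ref{aalaa2} $\to$ Proposition~\ref{lxa2toa} $\to$ Theorem~\ref{foequivalences} $\to$ Lemma~\ref{coreisrigid}, all of whose proofs are already complete at this point and none of which depend on Lemma~\ref{a2alaa2}, so there is no circularity. Your version thus makes explicit a step the paper leaves tacit; otherwise the two proofs coincide.
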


\begin{proof} [Proof of Lemma \ref{aalaa2}]
By Lemma \ref{expoprod} a homomorphism $P: \BL_n \rightarrow
\BA^{\left (\BA^2\right )}$ corresponds to the homomorphism $\phi:
\BL_n \times \BA^2 \rightarrow \BA$ defined by $\phi(i,a,b) =
P(i)(a,b)$. If $P(0) = \pi_1$ and $P(n) = \pi_2$, then $\phi$ is
constant on every $\sim_n$-equivalence class, hence we can define
a homomorphism $\psi: \BL_n \times \BA^2 /\!\!\sim_n \rightarrow
\BA$ by $\psi((t,a,b)/\!\!\sim_n) = \phi(t,a,b)$.

Conversely, any homomorphism $\psi: \BL_n \times \BA^2 /\!\!\sim_n
\rightarrow \BA$ can be composed with the quotient map $q: \BL_n
\times \BA^2  \rightarrow \BL_n \times \BA^2 /\!\!\sim_n$ to give
a homomorphism $q \circ \psi: \BL_n \times \BA^2  \rightarrow
\BA$. By Lemma \ref{expoprod}, $q \circ \psi$ corresponds to a
homomorphism $P: \BL_n \rightarrow \BA^{\left (\BA^2\right )}$,
and by definition of $\sim_n$ there exist homomorphisms $\phi_1,
\phi_2$ from $\BA$ to itself such that $P(0) = \phi_1 \circ \pi_1$
and $P(n) = \phi_2 \circ \pi_2$. If $\BA$ is a core, then by
Lemma~\ref{coreisrigid}, $\phi_1$ and $\phi_2$ are both the
identity, whence $P(0) = \pi_1$ and $P(n) = \pi_2$.
\end{proof}

\begin{proof} [Proof of of Lemma \ref{a2alaa2}]
Suppose that $\BA^2$ dismantles to its diagonal $\Delta_{\BA^2}$.
By Lemma~\ref{dismantle}, for some $n$ there exists a homomorphism
$P : \BL_n \rightarrow \left (\BA^2\right )^{\left (\BA^2\right
)}$ such that $P(0)$ is the identity and $P(n)$ is a retraction on
$\Delta_{\BA^2}$. We can then define a homomorphism $P': \BL_{2n}
\rightarrow \BA^{\left (\BA^2\right )}$ by $P'(t) = \pi_1 \circ
P(t)$ and $P'(2n - t) = \pi_2 \circ P(t)$ for $t = 0, \ldots, n$.
Indeed both definitions of $P'(n)$ coincide since $P(n)$ is a
retraction on $\Delta_{\BA^2}$, and since $P(0)$ is the identity,
$P'(0) = \pi_1$ and $P'(2n) = \pi_2$.

Conversely, for every homomorphism $P: \BL_{n} \rightarrow
\BA^{\left (\BA^2\right )}$ such that $P(0) = \pi_1$ and $P(n) =
\pi_2$, we can define a homomorphism $P' : \BL_n \rightarrow \left
(\BA^2\right )^{\left (\BA^2\right )}$ by $P'(t) = (P(t),\pi_2)$.
Then $P'(0) = (\pi_1, \pi_2)$ is the identity and $P'(n) =
(\pi_2,\pi_2)$ is a retraction on $\Delta_{\BA^2}$. If $\BA$ is a
core, then since $\Delta_{\BA^2}$ is isomorphic to $\BA$ via the
canonical isomorphism, the restriction of every $P(t)$ to
$\Delta_{\BA^2}$ must coincide with $\pi_2$ by
Lemma~\ref{coreisrigid}. Hence for $t = 0, \ldots, n$, $P'(t)$
fixes $\Delta_{\BA^2}$. Therefore $\BA^2$ dismantles to $\BA$ by
Lemma~\ref{dismantle}.
\end{proof}

Let $\BA$ be a relational structure such that $\BA^2$ dismantles
to $\Delta_{\BA^2}$. Then by Lemma~\ref{a2alaa2}, $\BA^{\left
(\BA^2\right )}$ contains a link between the two projections, thus
for some $n$ there exists a homomorphism from $\BL_n \times \BA^2
/\!\!\sim_n$ to $\BA$ by Lemma~\ref{aalaa2}. Hence, by
Theorem~\ref{focharacterisations}, $\BA$-CSP is first-order
definable. The converse does not hold in general. However, for any
retract $\BB$ of $\BA$, $\BA$-CSP is equivalent to $\BB$-CSP. In
particular, if $\BB$ is the core of $\BA$ and $\BA$-CSP is
first-order definable, then Theorem~\ref{focharacterisations},
Lemma~\ref{aalaa2} and Lemma~\ref{a2alaa2} imply that $\BB^2$
dismantles to $\Delta_{\BB^2}$. Therefore we have proved the
following:
\begin{theorem} \label{foinnp}
A relational structure has a first-order definable CSP if and only
if it has a retract whose square dismantles to its diagonal.
\end{theorem}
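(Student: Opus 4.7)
The plan is to chain Theorem~\ref{focharacterisations}, Lemma~\ref{aalaa2} and Lemma~\ref{a2alaa2}, using the core of $\BA$ as the retract in question. The key preliminary observation is that for any retract $\BB$ of $\BA$, a structure admits a homomorphism to $\BA$ if and only if it admits one to $\BB$ (one direction uses the inclusion $\BB \hookrightarrow \BA$, the other composes a map to $\BA$ with a retraction onto $\BB$). Thus $\BA$-CSP and $\BB$-CSP define the same class of structures, and first-order definability transfers between them.

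For the ``if'' direction, I start from a retract $\BB$ of $\BA$ whose square dismantles to the diagonal. The forward implication of Lemma~\ref{a2alaa2} then produces a homomorphism $P: \BL_n \to \BB^{(\BB^2)}$ with $P(0) = \pi_1$ and $P(n) = \pi_2$, which in turn (by the forward part of Lemma~\ref{aalaa2}) yields a homomorphism $\BL_n \times \BB^2 /\!\!\sim_n \to \BB$. Theorem~\ref{focharacterisations} then makes $\BB$-CSP, and hence $\BA$-CSP, first-order definable.

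For the ``only if'' direction, I let $\BB$ denote the core of $\BA$, which is a retract and so satisfies that $\BB$-CSP and $\BA$-CSP coincide; in particular $\BB$-CSP is first-order definable. Theorem~\ref{focharacterisations} then furnishes, for some $n$, a homomorphism $\BL_n \times \BB^2 /\!\!\sim_n \to \BB$. Because $\BB$ is a core, the \emph{converse} halves of Lemma~\ref{aalaa2} and Lemma~\ref{a2alaa2} now apply, producing successively a link $P : \BL_n \to \BB^{(\BB^2)}$ with endpoints $\pi_1,\pi_2$ and a dismantling of $\BB^2$ onto $\Delta_{\BB^2}$. This gives the required retract.

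The only delicate step is the use of the converse parts of the two lemmas, which need the rigidity of the core (Lemma~\ref{coreisrigid}) to pin down the endpoints of the link as the genuine projections and the target of the dismantling as the canonical diagonal. Working with the core of $\BA$ rather than $\BA$ itself makes this automatic, so once the dictionary of equivalences is set up no further argument is required.
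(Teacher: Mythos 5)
Your proposal is correct and follows exactly the same path as the paper: chain Theorem~\ref{focharacterisations}, Lemma~\ref{aalaa2}, and Lemma~\ref{a2alaa2} in the forward direction for the ``if'' half, and apply their converses to the core of $\BA$ (using rigidity from Lemma~\ref{coreisrigid}) for the ``only if'' half, with the CSP-equivalence of retracts supplying the transfer in both directions. No gaps.
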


\section{The complexity of recognising first-order
definable CSP's} \label{complexsection}

\begin{theorem} \label{foisnpc}
The problem of determining whether a relational structure $\BA$
has a first-order definable CSP is {\bf NP}-complete.
\end{theorem}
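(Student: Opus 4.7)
The proof will have two parts, membership in \textbf{NP} and \textbf{NP}-hardness, and both will lean on Theorem~\ref{foinnp}, which characterises first-order definability as the existence of a retract whose square dismantles to its diagonal.

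\emph{Membership.} A polynomial-size certificate is a pair $(B,\rho)$ with $B\subseteq A$ and $\rho:A\to B$. The verifier first checks that $\rho$ fixes $B$ pointwise and that the image under $\rho$ of every hyperedge of $\BA$ is again a hyperedge of $\BA$; both checks are polynomial in $|A|$. Next, the verifier builds $\BB^2$ (polynomial in $|B|$ and in the sizes of the relations) and attempts to dismantle it to $\Delta_{\BB^2}$ by greedily removing any dominated element outside the diagonal until none remains. By Lemma~\ref{greedy} the greedy strategy succeeds if and only if a dismantling exists, and the whole procedure runs in polynomial time. Combined with Theorem~\ref{foinnp} this places the problem in \textbf{NP}.

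\emph{Hardness.} I will reduce from a fixed \textbf{NP}-complete problem, for instance graph $3$-colourability, that is $K_3$-CSP on graphs. Given an input $\BG$, the plan is to construct a $\sigma$-structure $f(\BG)$ as an amalgam of $\BG$ with two designated gadgets: a ``good'' gadget $\BB_0$, chosen to be a core with finite duality (such as the transitive tournament $T_3$ from the introduction), and a ``bad'' gadget $\BB_1$, chosen so that its core lacks finite duality (for instance the directed path $P_3$, which is a core whose square does not dismantle to its diagonal). The gadgets will be wired so that a solution to the source instance provides a retraction of $f(\BG)$ onto $\BB_0$ that simultaneously collapses $\BB_1$ into $\BB_0$; the forward implication ``yes-instance $\Rightarrow$ FO-definable'' then follows by exhibiting this retraction together with Theorem~\ref{foinnp}.

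\emph{Main obstacle.} The delicate direction is the converse: in a no-instance, no retract of $f(\BG)$ may have first-order definable CSP. To push this through I would combine Lemma~\ref{fd=bd} with the Erd\H{o}s-type Lemma~\ref{largegirth} to produce, inside the core of any candidate retract, critical obstructions of arbitrarily large diameter, contradicting finite duality via Theorem~\ref{foequivalences}. Designing the gadget connections so that (i) the source yes-instances are exactly those that trigger the clean retraction onto $\BB_0$, and (ii) in no-instances every possible retract inherits an essentially $\BB_1$-like substructure whose critical obstructions are unbounded in diameter, is where the real work lies; this is the main technical obstacle, and once it is met Theorem~\ref{foinnp} closes the reduction.
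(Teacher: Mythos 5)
Your membership argument is correct and matches the paper's: both rest on Theorem~\ref{foinnp}, with the retract and retraction as the certificate, verified by the greedy dismantling of $\BB^2$ via Lemma~\ref{greedy}.

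The hardness part, however, has a genuine gap, and you say so yourself: the gadget amalgamation is sketched only at the level of intent, and the heart of the matter --- arranging that in a no-instance \emph{every} retract fails to have FO-definable CSP --- is deferred as ``where the real work lies.'' That deferred step is not a minor cleanup; it is the entire content of the reduction, and your plan of invoking Lemma~\ref{largegirth} and Lemma~\ref{fd=bd} to exhibit unbounded critical obstructions inside an unknown retract of a gadget amalgam is a genuinely hard target with no evident path. The difficulty is that a $\sigma$-structure can have an FO-definable CSP even when it visibly contains a ``bad'' substructure, because the relevant object is its \emph{core}, and controlling the core of an arbitrary amalgam is exactly what is nontrivial here.

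The paper's actual reduction avoids this by fixing the duality theory in advance. From a 3-SAT instance $\mathcal{I}=\bigwedge_{i=0}^{n-1}(L_{i,1}\vee L_{i,2}\vee L_{i,3})$ it builds a digraph $H$ on $\{0,\dots,n-1\}\times\{1,2,3\}$ with an arc from $(i,j)$ to $(i',j')$ iff $i<i'$ and $L_{i,j}$ is not the negation of $L_{i',j'}$. Projection onto the first coordinate is a homomorphism $H\to T_n$, and every tree mapping to $T_n$ lifts to $H$; so $H$ and $T_n$ admit exactly the same tree obstructions. Since $T_n$ has finite duality, $H$-CSP is FO-definable iff $T_n$ is the core of $H$ (Theorem~\ref{foequivalences}), and $T_n$ embeds as a retract of $H$ iff one can pick one literal per clause without conflicts, i.e. iff $\mathcal{I}$ is satisfiable. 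This design sidesteps the converse direction entirely: the candidate core is pinned down to $T_n$ from the start, and no analysis of arbitrary retracts is needed. You would need to supply something of comparable precision before your reduction could be accepted.
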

In fact, we will show the problem to be {\bf NP}-complete even in
the restricted case of directed graphs. We contrast this with the
following result:
\begin{theorem} \label{focoreispoly}
The problem of determining whether a relational structure $\BA$ is
a core with a first-order definable CSP can be solved in
polynomial time.
\end{theorem}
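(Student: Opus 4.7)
The plan is to combine Theorem~\ref{foinnp} with Lemma~\ref{coreisrigid}. On the one hand, if $\BA$ is a core with first-order definable CSP, then Theorem~\ref{foinnp}, applied with $\BA$ as its own retract, forces $\BA^2$ to dismantle to $\Delta_{\BA^2}$, and Lemma~\ref{coreisrigid} forces $\BA$ to be rigid. On the other hand, any rigid structure is automatically a core: since the only endomorphism is the identity, no proper retraction can exist. Hence ``$\BA$ is a core with first-order definable CSP'' coincides with the conjunction of ``$\BA^2$ dismantles to $\Delta_{\BA^2}$'' and ``$\BA$ is rigid''.

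The algorithm therefore proceeds in two phases. First, construct $\BA^2$ and greedily dismantle it outside $\Delta_{\BA^2}$. By Lemma~\ref{greedy} greedy dismantling is correct, and each round of inspection is polynomial in the size of $\BA^2$, so this phase runs in polynomial time in $|A|$. If the dismantling does not terminate at $\Delta_{\BA^2}$, output \emph{no}. Otherwise Theorem~\ref{foinnp} guarantees that $\BA$-CSP is first-order definable, so $\BA$ has tree duality, and it remains only to check rigidity.

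For the rigidity test, iterate over all ordered pairs $(a,b) \in A \times A$ with $a \neq b$ and decide whether a homomorphism $f : \BA \to \BA$ with $f(a) = b$ exists. Each such query is answered by running the $\BA$-hyperedge consistency check on input $\BA$, initialised with the list $\{b\}$ at vertex $a$ and with $A$ at every other vertex. Under tree duality of $\BA$, a pointed variant of Lemma~\ref{hcc} ensures that the check correctly decides the pointed CSP: failure produces a tree with a distinguished root that rules out any pointed homomorphism, while success, combined with tree duality, yields an actual pointed homomorphism. If no pointed homomorphism is detected for any off-diagonal pair, accept; otherwise reject. There are $O(|A|^2)$ queries and each runs in polynomial time, so the overall running time is polynomial.

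The main technical obstacle is justifying the pointed variant of Lemma~\ref{hcc}. The failure direction is a bookkeeping adaptation of the inductive tree construction in the proof of Lemma~\ref{hcc}, carrying the distinguished root through the recursion to obtain a pointed tree obstruction. The success direction is more subtle: one must extract a pointed homomorphism from the resulting hom to ${\mathcal U}(\BA)$, which requires a careful application of the tree-duality characterisation so that the pointed constraint $\{b\} \mapsto b$ is preserved when lifting back to $\BA$.
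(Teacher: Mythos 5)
Your Phase~1 is identical to the paper's: greedily dismantle $\BA^2$ towards $\Delta_{\BA^2}$, justified by Lemma~\ref{greedy}, and reject if it fails. Your Phase~2 is a genuinely different route. The paper tests coreness by forming, for each pair $a\neq b$, the quotient $\BA_{\{a,b\}}$ and running the \emph{unpointed} hyperedge consistency check to decide whether $\BA_{\{a,b\}}$ admits a homomorphism to $\BA$; this is directly covered by Lemma~\ref{hcc} and tree duality, with no further work, and it detects exactly the non-injective endomorphisms. You instead test rigidity via a \emph{pointed} consistency check seeking an endomorphism with $a\mapsto b$. The two tests are equivalent here because, by Lemma~\ref{coreisrigid}, ``core'' and ``rigid'' coincide once tree duality holds.

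There is, however, a gap in your justification of the pointed check's ``success direction.'' You claim that if the pointed check succeeds then (under tree duality) a pointed homomorphism $\BA\to\BA$ with $a\mapsto b$ exists. But a homomorphism $h:{\mathcal U}(\BA)\to\BA$ need not send singletons to themselves, so composing $h$ with the homomorphism $\phi:\BA\to{\mathcal U}(\BA)$ satisfying $\phi(a)=\{b\}$ yields a pointed homomorphism $a\mapsto h(\{b\})$, not necessarily $a\mapsto b$. The general backward direction you invoke is therefore not established. Fortunately the algorithm is still correct, but for a subtler reason: if $\BA$ is rigid with tree duality, then $h\circ\iota=\mbox{id}$ and $h\circ\phi=\mbox{id}$ (where $\iota$ is the singleton embedding), which forces $a=b$, so the check fails on every off-diagonal pair; and if $\BA$ is not rigid, the forward direction alone (a non-identity endomorphism witnesses a pointed homomorphism) makes some check succeed. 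You should supply this case analysis rather than assert a pointed analogue of Lemma~\ref{hcc}; alternatively, switch to the paper's quotient test, which sidesteps the issue entirely by staying within the unpointed setting.
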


In particular, Theorem~\ref{focoreispoly} implies that deciding
whether an input core structure $\BA$ has a first-order definable
CSP can be done in polynomial time, but our algorithm does not require
a certificate that the input is a core.

\begin{proof} [Proof of Theorem \ref{foisnpc}]
Theorem \ref{foinnp} shows that the problem is in {\bf NP}. We
will show that 3-SAT reduces to the problem of determining whether
a given digraph has first-order definable CSP. Let ${\mathcal I} =
\bigwedge_{i=0}^{n-1} \left ( L_{i,1} \vee L_{i,2} \vee L_{i,3}
\right )$ be an instance of 3-SAT, where each literal is one of
the variables $x_1, \ldots, x_m$ or its negation, and (without
loss of generality) $L_{i,j} \neq L_{i,j'}$ when $j \neq j'$. We
construct a digraph $H$ such that ${\mathcal I}$ is satisfiable if
and only if $H$ has first-order definable CSP. The vertex-set of
$H$ is $\{0, \ldots, n-1\} \times \{1, 2, 3\}$, and there is an
arc from $(i,j)$ to $(i',j')$ if and only if $i < i'$ and
$L_{i,j}$ is not the negation of $L_{i',j'}$.

Thus the map $\phi$ from $H$ to the transitive tournament $T_n$ on
$n$ vertices defined by $\phi(i,j) = i$ is a homomorphism.
Furthermore it is not hard to see that for every tree $A$ which
admits a homomorphism $\psi: A \mapsto T_n$, there exists a
homomorphism $\hat{\psi}: A \mapsto H$ such that $\psi = \phi
\circ \hat{\psi}$. Thus the trees that map to $H$ are precisely
those which map to $T_n$. Since $T_n$ has finite duality
\cite{nestar}, this means that $H$ has first-order definable CSP
if and only if $T_n$ is the core of $H$ by
Theorem~\ref{foequivalences}.

If ${\mathcal I}$ is satisfiable, then selecting for each $i$ an
index $j_i$ such that $L_{i,j_i}$ is true yields a homomorphic
image $\{ (i,j_i) : 1 \leq i \leq n \}$ of $T_n$ in $H$.
Conversely, if $\{ (i,j_i) : 1 \leq i \leq n \}$ is a homomorphic
image of $T_n$ in $H$, then we can consistently deem the literals
$L_{i,j_i}$ to be true to find a satisfactory truth assignment of
${\mathcal I}$. Therefore ${\mathcal I}$ is satisfiable if and
only if $H$ has first-order definable CSP.
\end{proof}

\begin{proof} [Proof of Theorem \ref{focoreispoly}]
We first test whether $\BA^2$ dismantles to $\Delta_{\BA^2}$.
According to Lemma~\ref{greedy} this step can be performed in
polynomial time using the greedy algorithm. If the answer is
negative, then either $\BA$ is not a core, or it is a core which
does not have a first-order definable CSP. In any case, we output
``no'' and stop. If the answer is positive, then $\BA$ does have
first-order definable CSP, but it may not be a core. For each pair
$a, b \in A, a \neq b$ we form the quotient $\BA_{\{a,b\}}$ of
$\BA$ under the equivalence which identifies $a$ and $b$. By
Theorem~\ref{foequivalences}, $\BA$ has tree duality, hence the
polynomial consistency-check algorithm (see \cite{fedvar2})
detects whether $\BA_{\{a,b\}}$ admits a homomorphism to $\BA$. If
such a homomorphism $\phi$ exists, then $\BA$ admits a
homomorphism to its proper substructure $\phi(\BA_{\{a,b\}})$
hence it is not a core; we then output ``no'' and stop. If no
homomorphism exists from any quotient $\BA_{\{a,b\}}$ to $\BA$,
then $\BA$ is a core. We then output ``yes''.
\end{proof}

\section{Producing solutions of first-order definable CSP's} \label{psfocsp}

Let $\BA$ be a structure such that $\BA^2$ dismantles to its
diagonal. Then $\BA$ has a first-order definable CSP; furthermore
without loss of generality, we can assume that $\BA$ is a core,
since adding to the type $\sigma$ a unary relation for each
element $A$ preserves the dismantling of
$\BA^2$ to its diagonal. Thus, the hyperedge consistency check
algorithm is sufficient to determine whether a structure $\BB$
admits a homomorphism to $\BA$. It is possible to find an explicit
homomorphism from $\BB$ to $\BA$ in polynomial time using vertex
identifications on a trial and error basis. In this section, we
provide an alternative algorithm based on dismantlings of $\BB
\times \BA$. We will use the following variation of Lemma
\ref{greedy}:

\begin{lem} \label{udt}
Let $\BB$ be a structure which dismantles to two substructures
$\BC$ and $\BC'$. If neither of $\BC$ and $\BC'$ have dominated
elements, then $\BC$ and $\BC'$ are isomorphic.
\end{lem}

\proof Let $x_1, \ldots, x_n$ be a dismantling sequence of $\BB$
on $\BC'$. For $i = 1, \ldots, n$, let $\BB_i$ be the substructure
of $\BB$ induced by $\{x_i, \ldots, x_n\} \cup C'$. We define a
sequence $\BC_0 = \BC, \BC_1, \ldots, \BC_n$ of structures such
that  $\BC_i = \BC_{i-1}$ if $x_i$ is not in the universe of
$\BC_{i-1}$, and otherwise $\BC_i$ is obtained from $\BC_{i-1}$ by
replacing the element $x_i$ by an element which dominates it in
$\BB_i$. By induction we prove that
\begin{quote}
 $\BB$ dismantles to $\BC_i$, which is isomorphic to $\BC$.
\end{quote}
Indeed, for $i = 0$ this is given, and the induction step clearly
works when $\BC_i = \BC_{i-1}$. Suppose that $\BC_i$ is obtained
from $\BC_{i-1}$ by replacing $x_i$ by $y$. Note that $y$ is not
already in  $\BC_{i-1}$ since $\BC_{i-1}$ is isomorphic to $\BC$
which contains no dominated elements. By Lemma \ref{greedy}, there
exists a dismantling sequence $z_1, \ldots z_m$ of $\BB_i$ on
$\BC_{i-1}$. By replacing $y$ by $x_i$ in this sequence, we get a
dismantling sequence of $\BB_i$ on $\BC_{i}$, whence $\BB$
dismantles to $\BC_{i}$. Moreover, $\BC_{i-1} \cup \BC_{i}$
clearly dismantles to both $\BC_{i-1}$ and  $\BC_{i}$, whence
$x_i$ and $y$ dominate each other in  $\BC_{i-1} \cup \BC_{i}$.
Therefore $\BC_{i-1}$ and  $\BC_{i}$ are isomorphic.

Thus, $\BC'$ contains a substructure isomorphic to $\BC$. By
interchanging the roles of $\BC$ and $\BC'$, we conclude that
$\BC$ and $\BC'$ are isomorphic. \qed

In a product $\BB \times \BA$,
an element $(b,a)$ is said to be {\em dominated in
the second coordinate} if it is dominated by an element of the
form $(b,a')$. We say that  $\BB \times \BA$ {\em dismantles in
the second coordinate} to its substructure $\BC$ if $\BC$ can be
obtained from $\BB \times \BA$ by successively removing elements
that are dominated in the second coordinate. Note that
dismantlings of $\BB \times \BA$ in the second coordinate can be
considered as ordinary dismantlings, by adding to the type
$\sigma$ one unary relation $R_b = \{ (b,a) : a \in A\}$ for each
$b \in B$. Hence the results of Lemma \ref{udt} apply, and $\BB
\times \BA$ dismantles in the second coordinate to a structure
$\BC$ with no elements dominated in the second coordinate. Such a
structure $\BC$ is unique up to isomorphism. For each $b \in B$,
there exists at least one $a \in A$ such that $(b,a) \in C$. If
for each $b \in B$, there exists exactly one $a = \phi(b)
\in A$ such that $(b,a) \in C$, then $\BC$ is the {\em graph} of
the function $\phi: B \rightarrow A$. The latter is a homomorphism
from $\BB$ to $\BA$ precisely when $\BC$ is isomorphic to $\BB$.

\begin{thm} \label{distohom}
Let $\BA$ be a structure such that $\BA^2$ dismantles to its
diagonal. For a structure $\BB$, let $\BC$ be a substructure of
$\BB \times \BA$ obtained by dismantling in the second coordinate
until no more elements are dominated in the second coordinate.
Then $\BB$ admits a homomorphism to $\BA$ if and only if $\BC$ is
the graph of a homomorphism from $\BB$ to $\BA$.
\end{thm}

\proof Obviously, if $\BB \times \BA$ dismantles in the second
coordinate to the graph of a homomorphism $\psi$ from $\BB$ to
$\BA$, then $\BB$ admits a homomorphism to $\BA$. The proof of the
converse parallels that of Theorem \ref{foinnp}. As mentioned in
the beginning of this section, we can assume that $\BA$ is a core.
Let $\BB$ be a structure which admits a homomorphism to $\BA$.

We first define the structure $(\BL_n \times \BB \times \BA
/\!\!\sim_n)^*$ as follows: $\BL_n \times \BB \times \BA
/\!\!\sim_n$ is the quotient of the product $ \BL_n \times \BB
\times \BA$ under the equivalence $\sim_n$ defined by
$$
(k,b,a) \sim_n (k',b',a') \equiv \left \{
\begin{array}{l}
\mbox{$(k,b,a) = (k',b',a')$} \\
\mbox{or $k = k' = 0$ and $a = a'$}\\
\mbox{or $k = k' = n$ and $b = b'$}.
\end{array}
\right.
$$
Note that since $\BB$ admits a homomorphism to $\BA$, the fiber
$(\{n\} \times \BB \times \BA/\!\!\sim_n)$ is isomorphic to $\BB$,
while $(\{0\} \times \BB \times \BA)/\!\!\sim_n$ is not
necessarily ismomorphic to $\BA$. We complete the structure of
$(\BL_n \times \BB \times \BA /\!\!\sim_n)^*$ by adding a copy of
$\BA$ to the fiber $\{0\} \times \BB \times \BA$: for each $R \in
\sigma$ and $(a_1, \ldots, a_r) \in R(\BA)$, we put
$((0,a_1,b)/\!\!\sim_n, \ldots, (0,a_1,b)/\!\!\sim_n) \in R((\BL_n
\times \BB \times \BA /\!\!\sim_n)^*)$.

As in Lemma \ref{cutends}, the substructures of $(\BL_n \times \BB
\times \BA /\!\!\sim_n)^*$ induced by $\{ (k,a,b)/\!\!\sim_n : k
\neq 0\}$ and $\{ (k,a,b)/\!\!\sim_n : k \neq n\}$ admit natural
homomorphisms to $\BA$ and $\BB$ respectively, whence both of
these admit homomorphisms to $\BA$. Thus if $n$ is larger than the
diameter of the minimal obstructions of $\BA$, then there exists a
homomorphism
$$\alpha:
(\BL_n \times \BB \times \BA /\!\!\sim_n)^* \rightarrow \BA.$$
Note that $\alpha$ corresponds to a link of homomorphisms
$\alpha_k \in \BA^{\BB \times \BA}, k = 0, \ldots, n$, where
$\alpha_0 = \pi_A$ and $\alpha_n = \phi \circ \pi_B$ for some
homomorphism $\phi: \BB \rightarrow \BA$. We use $\alpha$ to
define a link of homomorphisms $\beta_k \in {(\BB \times
\BA)}^{\BB \times \BA}, k = 0, \ldots, n$ by
$$
\beta_k(b,a) = (b,\alpha((k,b,a)/\!\!\sim_n ).
$$
Thus, $\beta_0 = \mbox{id}_{\BB \times \BA}$, $\beta_k(b,a) = (b,
\beta_k'(b,a)), k = 1, \ldots, n-1$ and $\beta_n(b,a) = (b,
\phi(b))$. There are two desirable properties which would allow us
to reach our conclusion: If $\beta_0, \ldots \beta_n$ were a link
of retractions such that $\beta_0(A \times B) \supseteq \beta_1(A
\times B) \supseteq \ldots \supseteq \beta_n(A \times B)$, then by
Lemma \ref{dismantle} we would have that $\BB \times \BA$
dismantles on $\beta_n(\BB \times \BA)$. However the current link
$\beta_0, \ldots \beta_n$ may have neither of these properties.
Thus we will repeatedly modify our link through the following two
procedures:
\begin{itemize}
\item[(i)] If $\gamma_0, \ldots, \gamma_n \in {(\BB \times
\BA)}^{\BB \times \BA}$ is a link with the same properties as
$\beta_0, \ldots \beta_n$ above, then for $p = \vert A\vert !$ the
functions  $\rho_0, \ldots, \rho_n \in {(\BB \times \BA)}^{\BB
\times \BA}$ defined by $\rho_k = \gamma_k^p$ form a link of
retractions, where $\rho_0 = \mbox{id}_{\BB \times \BA}$,
$\rho_k(b,a) = (b, \rho_k'(b,a)), k = 1, \ldots, n-1$ and
$\beta_n(b,a) = (b, \rho_n'(b))$ for some $\rho_n': B \rightarrow
A$. However we do not necessarily have $\rho_0(A \times B)
\supseteq \rho_1(A \times B) \supseteq \ldots \supseteq \rho_n(A
\times B)$.
 \item[(ii)] If  $\rho_0, \ldots, \rho_n \in {(\BB
\times \BA)}^{\BB \times \BA}$ is a link of retractions with the
properties given in (i), we define the sequence $\gamma_0, \ldots,
\gamma_n \in {(\BB \times \BA)}^{\BB \times \BA}$ recursively by
$\gamma_0 = \rho_0$ and $\gamma_{k} = \gamma_{k-1} \circ \rho_k, k
= 1, \ldots n$. Then we clearly have $\gamma_0(B \times A)
\supseteq \gamma_1(B \times A) \supseteq \ldots \gamma_n(B \times
A)$. Moreover, $\gamma_0 = \mbox{id}_{\BB \times \BA}$,
$\gamma_k(b,a) = (b, \gamma_k'(b,a)), k = 1, \ldots, n-1$ and
$\gamma_n(b,a) = (b, \gamma_n'(b))$ for some $\gamma_n': B
\rightarrow A$. In particular, $\gamma_0 = \rho_0 = \mbox{id}_{\BB
\times \BA}$ and $\gamma_1 = \rho_1$ are adjacent, and if
$\gamma_{k-1}$ and $\gamma_k$ are adjacent, then so are $\gamma_k
= \gamma_{k-1} \circ \rho_k$ and $\gamma_{k+1} = \gamma_{k} \circ
\rho_{k+1}$ by Lemma \ref{composition}. Thus, $\gamma_0, \ldots,
\gamma_n$ forms a link, though these homomorphisms may not be
retractions.
\end{itemize}
After an initial run through steps (i) and (ii), every time we
need to repeat step (ii) it is because the previous step (i)
reduced the size of the images of some functions in the link. Thus
after some repetitions, we eventually get a link $\delta_0,
\ldots, \delta_n \in {(\BB \times \BA)}^{\BB \times \BA}$ such
that $\delta_0 = \mbox{id}_{\BB \times \BA}$, each $\delta_k$ is a
retraction and $\delta_0(A \times B) \supseteq \delta_1(A \times B)
\supseteq \ldots \supseteq \delta_n(A \times B)$. Thus the map $P:
\BL_n \rightarrow {(\BB \times \BA)}^{\BB \times \BA}$ defined by
$P(k) = \delta_k$ satisfies the hypotheses of Lemma
\ref{dismantle} whence $A \times B$ dismantles to $\delta_n(\BB
\times \BA)$. Moreover, each function $\delta_k$ is of the form
$\delta_k(b,a) = (b,\delta_k'(b,a))$, and also preserve the
relations $R_b = \{ (b,a) : A \in A\}, b \in B$. Thus $A \times B$
dismantles in the second coordinate to $\delta_n(\BB \times \BA) =
\{ (b, \psi(b)) : b \in B\}$ for some function $\psi: B
\rightarrow A$. Since there exists a homomorphism $\phi: \BB
\rightarrow \BA$ and the dismantling sequence induces a
homomorphism from the graph of $\phi$ to that of $\psi$, we
conclude that $\psi$ is indeed a homomorphism from $\BB$ to $\BA$.
\qed

Given a structure $\BA$ such that $\BA^2$ dismantles to its
diagonal, Theorem \ref{distohom} provides the following algorithm
for deciding whether a structure $\BB$ admits a homomorphism to
$\BA$: We dismantle $\BB \times \BA$ in the second coordinate
until we get a structure $\BC$ with no dominations in the second
coordinate. We then have the following possibilities:
\begin{itemize}
\item[(i)] If $\BC$ is not a graph, then there is no homomorphism
from $\BB$ to $\BA$. \item[(ii)] If $\BC$ is a graph, $\BC = \{
(b, \phi(b)) : b \in B\}$ where $\phi: B \rightarrow A$ is not a
homomorphism from $\BB$ to $\BA$, then there is no homomorphism
from $\BB$ to $\BA$. \item[(iii)] Otherwise, $\BB$ admits a
homomorphism to $\BA$, and $\BC$ is the graph of such a
homomorphism $\phi: \BB \rightarrow \BA$.
\end{itemize}
This algorithm works a bit like the hyperedge consistency check,
with the list of an element $b$ of $B$ identified with the fiber
$\{ (b,a) : a \in A\}$. In the dismantling algorithm, an element
is removed from a list if it becomes redundant rather than
inconsistent. Both algorithms work in $O(|B|^{d+2})$ time, where
$d$ is the maximum arity in $\sigma$.

\section{Inferred constraints and {\bf L}-complete CSP's}

In this section we analyse the computational complexity of CSP's
whose basic relations are inferred from those of a first-order
definable CSP. Let $\Gamma$ be a
 set of relations on the finite set $A$. The {\em relational
 clone generated by $\Gamma$}, denoted by $\langle \Gamma \rangle$,
is the set of relations on $A$
 inferred from the relations in $\Gamma$, i.e. definable from relations
in $\Gamma$ via primitive positive
 formulas. We now give equivalent combinatorial and algebraic
descriptions of the
 relations in $\langle \Gamma \rangle$  (see
e.g.~\cite{CohenJ06}). Recall from Section \ref{section_reference}
that an operation $f$ on a set $A$ {\em preserves} a relation
$\theta$ on $A$ if $f$ is a homomorphism from $\BA^n$ to $\BA$
where $\BA = \langle A;\theta\rangle$.

\begin{lemma} \label{6constr} Let $\Gamma$ be a finite set of relations on
$A$ and let $\theta$ be a $k$-ary relation on $A$. Then the
following conditions are equivalent:
\begin{enumerate}
\item $\theta \in \langle \Gamma  \rangle$; \item every operation
on $A$ that preserves every relation in $\Gamma$ also preserves
$\theta$; \item there exists a (primitive positive) formula
$$\phi(x_1,\dots,x_k) \equiv \exists y_1,\dots, \exists y_m
\psi(x_1,\dots,x_k,y_1,\dots,y_m)$$ where $\psi$ is a  conjunction
of atomic formulas with relations in $\Gamma \cup \{=\}$ such that
$(a_1,\dots,a_k)\in \theta$ if and only if $\phi(a_1,\dots,a_k)$
holds;

\item there exists a structure $\BX$ of the same signature as the
structure $\BA = \langle A;\Gamma \rangle$, and elements
$x_1,\dots,x_k \in X$ such that
$$\theta=\{(f(x_1),\dots,f(x_k)): f:\BX \rightarrow \BA\ \mbox{ a homomorphism}\}.$$
\end{enumerate}
\end{lemma}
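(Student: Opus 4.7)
The plan is to run the cycle (1) $\Rightarrow$ (2) $\Rightarrow$ (4) $\Rightarrow$ (3) $\Rightarrow$ (1). The equivalence (1) $\Leftrightarrow$ (3) is essentially a matter of definitions: $\langle \Gamma \rangle$ is by definition the closure of $\Gamma$ under pp-definability, and the equality symbol in (3) is eliminated by identifying variables, so neither direction requires an argument beyond unpacking notation.

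For (3) $\Leftrightarrow$ (4) I would use the standard canonical-database translation. Given a pp-formula $\phi(x_1,\dots,x_k) \equiv \exists y_1\dots\exists y_m\, \psi$, after merging any variables forced equal by atomic $=$ conjuncts, take $\BX$ on the universe $\{x_1,\dots,x_k,y_1,\dots,y_m\}$, and for each atomic conjunct $R_i(z_{i_1},\dots,z_{i_{r_i}})$ of $\psi$ put the hyperedge $(z_{i_1},\dots,z_{i_{r_i}})$ into $R_i(\BX)$; homomorphisms $f:\BX\to\BA$ then correspond bijectively with satisfying assignments of $\psi$, so $(f(x_1),\dots,f(x_k))$ ranges exactly over $\theta$. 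The reverse direction takes a structure $\BX$ with distinguished elements $x_1,\dots,x_k$, existentially quantifies over the remaining elements of $X$, and conjoins an atomic formula for each hyperedge. For (1) $\Rightarrow$ (2), I would observe that if $\theta$ is pp-defined by $\phi$ over $\Gamma$ and $f:A^n\to A$ preserves every $R\in \Gamma$, then given $n$ tuples of $\theta$ with witnesses $\bar{y}^{(1)},\dots,\bar{y}^{(n)}$ for the existential variables, the coordinatewise image $f(\bar{y}^{(1)},\dots,\bar{y}^{(n)})$ witnesses the coordinatewise image tuple, because $f$ preserves every relation appearing in $\psi$.

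The heart of the proof is the implication (2) $\Rightarrow$ (4), which I would handle by a classical indicator-structure argument. Enumerate $\theta=\{t^{(1)},\dots,t^{(n)}\}$ with $t^{(j)}=(t^{(j)}_1,\dots,t^{(j)}_k)$, take $\BX := \BA^n$, and set $x_i := (t^{(1)}_i,\dots,t^{(n)}_i) \in A^n$ for $i=1,\dots,k$. Each projection $\pi_j:\BA^n\to\BA$ is a homomorphism sending $(x_1,\dots,x_k)$ to $t^{(j)}$, which gives the containment $\theta \subseteq \{(f(x_1),\dots,f(x_k)) : f:\BX\to\BA\}$. Conversely, any homomorphism $f:\BA^n\to\BA$ is an $n$-ary operation on $A$ preserving every relation in $\Gamma$, so by (2) it preserves $\theta$ as well; applying $f$ coordinatewise to $t^{(1)},\dots,t^{(n)}$ produces exactly $(f(x_1),\dots,f(x_k))$, which therefore lies in $\theta$. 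The main obstacle is recognising that the columns of the matrix whose rows enumerate $\theta$ form the right set of generators inside the $|\theta|$-th power; once this observation is in hand the preservation hypothesis does all the work. The degenerate case $\theta = \emptyset$ is handled as a separate convention, since every operation trivially preserves it.
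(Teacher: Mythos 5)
The paper does not actually prove this lemma: the proof body is empty (only a \verb|\qed|), and the result is cited as standard background from the Geiger/Bodnarchuk--Kaluzhnin--Kotov--Romov theory of the $\mathrm{Inv}$--$\mathrm{Pol}$ Galois connection (see the pointer to Cohen--Jeavons). Your argument is the standard textbook proof of exactly this theorem, and it is correct: the cycle $(1)\Rightarrow(2)\Rightarrow(4)\Rightarrow(3)\Rightarrow(1)$ closes, the canonical-structure correspondence $(3)\Leftrightarrow(4)$ is right, and the key implication $(2)\Rightarrow(4)$ via the indicator structure $\BX=\BA^{|\theta|}$ with $x_i$ the $i$-th ``column'' of the matrix whose rows list $\theta$ is precisely the classical argument. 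One small caveat worth stating more sharply than you do: the case $\theta=\emptyset$ is not merely a ``convention'' that can be set aside, but is exactly where the equivalence can genuinely fail. If $\BA$ has a loop (an element $a$ with $(a,\dots,a)\in R$ for every $R\in\Gamma$), then every structure $\BX$ admits a homomorphism to $\BA$, so (4) never produces the empty relation and no pp-formula over $\BA$ is unsatisfiable, so (1) and (3) also exclude $\emptyset$; yet (2) holds vacuously for $\theta=\emptyset$. So (2) is strictly weaker in that degenerate case, and the clean statement should either assume $\theta\neq\emptyset$ or assume $\BA$ is loop-free. Since the paper itself leaves this implicit, your proof matches what the paper takes for granted, but the remark would tighten the exposition.
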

\qed

 A relation $\theta$ of arity $k \geq 2$ is {\em redundant} if
there exist  indices $i<j$ such that $x_i = x_j$ for any tuple
$\overline{x} \in \theta$; otherwise we say that $\theta$ is {\em
irredundant}. If there exist indices $i<j$ such that $x_i = x_j$
for any tuple $\overline{x} \in \theta$, and furthermore there
exist at least two distinct values $a$ and $b$ that appear as the
$i$-th coordinate of  tuples in $\theta$, then we say that
$\theta$ is {\em biredundant}. Stated differently, $\theta$ is
biredundant if the projection of $\theta$ onto two indices yields
the equality relation on a set with at least 2 elements.

\begin{theorem} Let $\BA$ be a core structure such that $\BA$-CSP is first-order
definable, and let $\BB$ be a structure whose basic relations are
contained in the relational clone generated by the basic relations
of $\BA$. Then \begin{enumerate} \item The problem $\BB$-CSP is in
{\bf L};  \item if $\BB$-CSP is not first-order definable, then it
is {\bf L}-complete; \item if none of the basic relations of $\BB$
is biredundant then $\BB$-CSP is first-order definable; if $\BB$
is a core the converse holds as well.
\end{enumerate}

\end{theorem}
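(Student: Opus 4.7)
\medskip
\noindent\emph{Proof plan.}
By Corollary~\ref{core_nuf}, $\BA$ admits a near-unanimity polymorphism $f$, and by Lemma~\ref{6constr}(2) $f$ also preserves every basic relation of $\BB$ (those lie in the relational clone of $\BA$). For part~(1), each basic relation of $\BB$ has a primitive positive definition over $\BA$, so unfolding each constraint of an input instance---introducing fresh variables for the existentially quantified witnesses---gives a logspace reduction from $\BB$-CSP to $\BA$-CSP \cite{jea}. Since $\BA$-CSP is FO-definable, it lies in ${\bf AC}^0 \subseteq {\bf L}$, and composition places $\BB$-CSP in ${\bf L}$.

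For the converse direction of part~(3) I would argue the contrapositive. Suppose $\BB$ is a core and some basic relation $\eta$ of $\BB$ is biredundant at coordinates $i<j$, with distinct values $a,b$ appearing at coordinate $i$. Choose $\bar x,\bar y \in \eta$ with $x_i = a$ and $y_i = b$; then $((x_1,y_1),\ldots,(x_r,y_r)) \in \eta(\BB^2)$ is a hyperedge carrying the off-diagonal element $(a,b)$ at both positions $i$ and $j$. A putative dominator $(c,d) \neq (a,b)$ of $(a,b)$ in $\BB^2$ would, after swapping only the $i$-th coordinate, produce an $\eta(\BB^2)$-hyperedge whose $i$-th entry is $(c,d)$ while its $j$-th entry remains $(a,b)$; biredundancy of $\eta$ applied to each of the two $\BA$-projections forces $c = a$ and $d = b$, a contradiction. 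Hence $(a,b)$ has no dominator in $\BB^2$ other than itself, $\BB^2$ fails to dismantle to $\Delta_{\BB^2}$, and Theorem~\ref{foinnp} (using that $\BB$ is a core) yields that $\BB$-CSP is not FO-definable.

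The forward direction of part~(3) is the heart of the theorem and my main obstacle. Assuming no basic relation of $\BB$ is biredundant, I plan to show $\BB^2$ dismantles to $\Delta_{\BB^2}$ and then invoke Theorem~\ref{foinnp}. The strategy is to start from a dismantling sequence of $\BA^2$ on $\Delta_{\BA^2}$ (which exists since $\BA$ is a core with FO-definable CSP) and transfer it to $\BB^2$: whenever $(c,d)$ dominates $(a,b)$ in some substructure of $\BA^2$, $(c,d)$ should continue to dominate $(a,b)$ in the corresponding substructure of $\BB^2$. Each $\eta(\BB^2)$-hyperedge through $(a,b)$ arises from a pair of $\BA$-witnesses for the pp-definition of $\eta$; the delicate step is to produce witnesses for the swapped hyperedge by perturbing the originals using $f$ and the $\BA^2$-domination. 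Non-biredundancy is precisely what rules out the local obstruction identified in the converse, while any redundant-but-not-biredundant coordinate is pinned to a single diagonal value and is harmless. The combinatorial kernel---simultaneously handling the existential variables and equality atoms appearing in the pp-definitions across both $\BA$-coordinates---is the step I expect to require the most care.

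Part~(2) then follows by combining the previous parts. Part~(1) already gives containment in ${\bf L}$. If $\BB$-CSP is not FO-definable, then the core $\BB_0$ of $\BB$ shares its CSP and also fails FO-definability; by the converse of part~(3) applied to $\BB_0$ (whose basic relations remain in the relational clone of $\BA$), $\BB_0$ contains a biredundant basic relation, which encodes equality on a set $S$ with $|S|\geq 2$. A standard logspace reduction from undirected $st$-connectivity---Reingold's ${\bf L}$-complete problem---through this biredundant relation yields ${\bf L}$-hardness of $\BB$-CSP, and hence ${\bf L}$-completeness. The sharper result that the complement of $\BB$-CSP is expressible in symmetric Datalog is the subject of \cite{ELT07}.
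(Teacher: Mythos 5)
The forward direction of part~(3) is the crux of the theorem, and your proposal does not prove it: you explicitly say it is your main obstacle and sketch only a transfer-of-dismantlings plan whose ``combinatorial kernel'' you have not pinned down. The paper's route is different and worth knowing. One does not try to move a dismantling of $\BA^2$ over to $\BB^2$; instead one takes the homomorphism $f:{^1\BA^n}\rightarrow\BA$ guaranteed by Corollary~\ref{1antoa} (a $1$-\emph{tolerant} near-unanimity operation, which is stronger than an ordinary NU polymorphism) and shows that the same map is a homomorphism from $^1\BB^n$ to $\BB$. Concretely, by Lemma~\ref{6constr}(4) each basic relation $\theta$ of $\BB$ is realised as $\{(g(x_1),\dots,g(x_k)):g:\BX\rightarrow\BA\}$; when $\theta$ is irredundant one takes the $x_i$ distinct and checks that for homomorphisms $f_1,\dots,f_{n-1}:\BX\rightarrow\BA$ and an arbitrary map $h$, the composite $p=f(f_1,\dots,f_{n-1},h)$ is again a homomorphism (because $f$ tolerates the one bad coordinate), and when $\theta$ is redundant but not biredundant the identified coordinates carry a constant value, so the NU identities force $p$ to respect the repetition. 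This is the step your plan circumvents but never replaces, and without it you have no proof.

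Your converse to part~(3) also has a gap, though a smaller one. You show that $(a,b)$ is not dominated in $\BB^2$ itself, but a dismantling sequence may first remove other entries of your chosen hyperedge $((x_1,y_1),\dots,(x_r,y_r))$, after which $(a,b)$ could become dominated in the resulting substructure — domination only gets \emph{easier} as vertices disappear. The paper closes this by an induction maintained along the whole sequence: at every stage there is a $\theta(\BB^2)$-hyperedge of the form $((0,1),(0,1),\dots)$ with all entries surviving, because whenever a non-$(0,1)$ entry $(c,d)$ is removed one substitutes its dominator, and $(0,1)$ itself cannot be the removed element since any substitute $(c',d')$ at one of the two identified coordinates would violate biredundancy. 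You need this induction, not just the single-step non-domination fact.

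Parts~(1) and~(2) you handle by routes genuinely different from the paper's (which simply cites the symmetric-Datalog containment of \cite{ELT07} and the \textbf{L}-hardness theorem of \cite{LT07}). Your part~(1) — unfold the pp-definitions to get a logspace reduction to $\BA$-CSP via \cite{jea}, then use ${\bf AC}^0\subseteq{\bf L}$ for the FO target — is a clean, self-contained alternative. Your part~(2) sketch, reducing $st$-connectivity through the biredundant relation, is plausible but left at the level of a gesture; the paper's citation does the work for you, but if you want to keep your reduction you would need to spell it out.
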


\begin{proof} The first two statements follow from Theorem 5 of
\cite{ELT07} and Theorem 3.1 of \cite{LT07}. Indeed, the problem
$\neg$($\BB$-CSP) is definable in symmetric Datalog, which is
enough to ensure that $\BB$-CSP is solvable in logspace.
Furthermore, every CSP which is not first-order definable is {\bf
L}-hard.

For the third statement we argue as follows: suppose first that no
basic relation of $\BB$ is biredundant. Since $\BA$ is a core with
first-order definable CSP, by Corollary \ref{1antoa}  there exists
a map $f$ which is a homomorphism from $^1\BA^n$ to $\BA$. We
shall prove that $f$ is also a homomorphism from $^1\BB^n$ to
$\BB$ which will conclude the proof by Corollary \ref{1antoa}. Let
$\theta \in \langle \Gamma \rangle$.
\begin{comment}
It follows from the proof of Theorem 2.1 in \cite{LT07}
that the primitive positive formula describing $\theta$ by Lemma
\ref{6constr} (3) can be chosen so that equalities occur only
between free variables $x_i$ and $x_j$.
\end{comment}
If $\theta$
is irredundant then in the description of $\theta$ in Lemma
\ref{6constr} (4) we may choose the elements $x_1,\dots,x_k$ to be
distinct. Let $f_1,\dots,f_{n-1}$ be homomorphisms from $\BX$ to
$\BA$ yielding tuples in $\theta$, and let $h:X \rightarrow A$ be
{\em any} map. It is easy to see  that the map
$p=f(f_1,\dots,f_{n-1},h)$ is a homomorphism from $\BX$ to $\BA$,
and hence $f$ is 1-tolerant for $\theta$. In the case where
$\theta$ is redundant, the argument is almost the same: if for
some indices we have $x_i=x_j$, since $\theta$ is not biredundant,
it follows that the value of $f_1,\dots,f_{n-1}$ at $x_i$ and
$x_j$ is a unique value, call it $a$; since $f$ is a
near-unanimity operation by  Lemma \ref{nuf}, it follows that the
value of $p$ at $x_i$ and $x_j$ is the same and so the tuple
produced by $p$ is in $\theta$.

Conversely, suppose that $\BB$ is a core and that one of its basic
relations  is biredundant: we shall show that the structure
$\BB^2$ does not dismantle to the diagonal. Indeed, suppose that
$\theta$ is biredundant and without loss of generality suppose
that its projection on the first two coordinates is the equality
relation on some subset $B$ of $A$ containing elements 0 and 1.
Suppose  that we have a dismantling of $\BB^2$: let
$A^2=X_0,\dots,X_k$ be the successive subsets of $A^2$ obtained by
removal of single elements. We prove by induction that for every
$i$ there exists a tuple of the form $((0,1),(0,1),\dots) \in
\theta(\BB^2)$ with all entries in $X_i$. This is clear for $i=0$.
Now suppose that there is such a tuple $\overline{x} \in
\theta(\BB^2)$ with all entries in $X_i$ and that $X_{i+1}$ is
obtained from $X_i$ by removal of $(c,d)$. If $(c,d)$ doesn't
appear in $\overline{x}$ then we're done; otherwise by definition
of dismantling there exists some element $(c',d') \in X_{i+1}$
that dominates $(c,d)$ and so the tuple obtained from
$\overline{x}$ by replacing every occurrence of $(c,d)$ by
$(c',d')$ is in $\theta(\BB^2)$. It is clear that $(c,d)\neq(0,1)$
because otherwise the tuple $((0,1),(c',d'),...)$ would be in
$\theta(\BB^2)$ contrary to the fact that $\theta(\BB^2)$ is
biredundant. Hence there is a tuple of the desired form with
entries in $X_{i+1}$, showing that no dismantling can end in the
diagonal.
\end{proof}

\section{Conclusion}

We have described a simple polynomial-time algorithm that
determines if a finite relational structure is a core with
first-order definable CSP (Theorem~\ref{focoreispoly}),
and have proved that deciding FO-definability  is {\bf NP}-complete
(Theorem~\ref{foisnpc}). We have also given various
characterisations of FO-definable structures in terms of sets of
obstructions (Theorem~\ref{foequivalences}), and proved that
core structures with finite duality admit a 1-tolerant near-unanimity
operation (Corollary~\ref{1antoa} and Lemma~\ref{nuf}).

 Feder and Vardi's Theorem~\ref{tddecidable} shows that
the problem of determining whether an input structure $\BA$ has
tree duality is decidable. In fact the proof of
Theorem~\ref{foisnpc} also implies that this problem is NP-hard,
but for the moment it is not known to belong to NP or even to
P-space. It would be interesting to have these issues resolved.

In the case of first-order definable CSP's, we now have an
algorithm which outputs a yes-no answer to the question as to
whether an input structure $\BA$ has a first-order definable CSP.
Using Lemma~\ref{treesofagivendiameter}, it is possible to modify
it so that in the case where $\BA$-CSP is first-order definable,
it outputs a first-order sentence $\Phi_{\BA}$ such that $\BB$
admits a homomorphism to $\BA$ if and only if $\Phi_{\BA}$ is true
on $\BB$. However the upper bound on the length of $\Phi_{\BA}$
involves a tower of exponents. It is not clear whether this is
realistic; \cite{nestar2} reports cases where the length of
$\Phi_{\BA}$ can be logarithmic in terms of the size of $\BA$, but
there are no examples in the direction of the other extreme.

\end{document}